\numberwithin{equation}{section}
\numberwithin{figure}{section}
\theoremstyle{plain}
\newtheorem{thm}{\protect\theoremname}
  \theoremstyle{definition}
  \newtheorem{defn}[thm]{\protect\definitionname}
  \theoremstyle{plain}
  \newtheorem{prop}[thm]{\protect\propositionname}
  \theoremstyle{plain}
  \newtheorem{lem}[thm]{\protect\lemmaname}
\date{}
  \providecommand{\definitionname}{Definition}
  \providecommand{\lemmaname}{Lemma}
  \providecommand{\propositionname}{Proposition}
\providecommand{\theoremname}{Theorem}
\newcommand{\res}[1]{
\underset{#1}{\mathrm{Res}}
} %Residue
\begin{document}

\title{Non-commutative integration, zeta functions and the Haar state for
$SU_{q}(2)$}

\author{Marco Matassa%
\thanks{SISSA, Via Bonomea 265, I-34136 Trieste, Italy. \textit{E-mail address}:
marco.matassa@gmail.com%
}}

\maketitle
We study a notion of non-commutative integration, in the spirit of modular spectral triples, for the quantum group $SU_{q}(2)$. In particular we define the non-commutative integral as the residue at the spectral dimension of a zeta function, which is constructed using a Dirac operator and a weight. We consider the Dirac operator introduced by Kaad and Senior and a family of weights depending on two parameters, which
are related to the diagonal automorphisms of $SU_{q}(2)$. We show that, after fixing one of the parameters, the non-commutative integral
coincides with the Haar state of $SU_{q}(2)$. Moreover we can impose an additional condition on the zeta function, which also fixes the
second parameter. For this unique choice the spectral dimension coincides with the classical dimension.

\section{Introduction}
Many works have been devoted to studying how quantum groups and their homogeneous spaces fit into the framework of spectral triples \cite{con-book}. In the last years, in particular, there have been several proposals of extensions of this notion, which could be used to accomodate these classes of non-commutative geometries. This is quite natural, since the axioms of a spectral triple are tailored on the commutative situation, and one naturally expects that new features, which appear only in the non-commutative world, should play a role in this description. We mention two among such proposals of extensions: the first one is the framework of \textit{twisted spectral
triples} \cite{type III}, which modifies the commutator condition by introducing a twist on the algebra; the second one is that of \textit{modular spectral triples} \cite{modular1,modular2,modular3} which, roughly speaking, modifies the resolvent condition by replacing the operator trace with a weight. Notice that these approaches modify two different axioms of a spectral triple. Also, more recently, there has been an
attempt to merge these two approaches \cite{kaad}, which was motivated
by a construction for the quantum group $SU_{q}(2)$ \cite{suq2-kaad}.

There is a natural question that arises by considering the framework of modular spectral triples: if we are allowed to replace the operator trace by a weight, is there any preferred choice?
Let us consider this question for the case of compact quantum groups, where we can be guided by symmetry. In this setting it is well known
that there is a unique state, the Haar state, which is the non-commutative analogue of the Haar integral for compact groups. But the choice of a state gives a notion of non-commutative integration, as it is known
from the theory of von Neumann algebras \cite{tak}.
Therefore we could ask for the following condition for a spectral triple: the non-commutative integral, which is defined in terms of the Dirac operator, should coincide with the Haar state.
However it is clear that this is not possible in the usual setting: indeed it follows, from
the properties of spectral triples, that the non-commutative integral is a trace, while the Haar state in general is not.
On the other hand in the extended frameworks we mentioned above, in particular for modular spectral triples, the non-commutative integral need not be a trace. Therefore such a requirement can be in principle satisfied. Moreover, getting back
to the question of a preferred choice, one could use such a criterion
to choose a natural weight in the context of modular spectral triples.

Here we will investigate in some detail this question for the case of the quantum group $SU_{q}(2)$. More specifically we consider the Dirac operator $D_{q}$ introduced in \cite{suq2-kaad}, which gives a (twisted) modular spectral triple. Our non-commutative integral will be defined as the residue at the spectral dimension of a certain zeta function. This zeta function is constructed in a natural way from the operator $D_{q}$ and a weight $\phi$. We consider a family of weights, depending on two parameters $a,b\in\mathbb{R}$, which essentially parametrize the most general diagonal automorphism on $SU_{q}(2)$. Our first
task will be to determine for which values of the parameters the zeta function is well defined, and determine its spectral dimension.

At this point we can impose our requirement of recovering the Haar state from the non-commutative integral. A necessary condition
is that their modular groups coincide.
We will show that this condition fixes $b=1$, but leaves $a$ undetermined.
Moreover, we can indeed show that the non-commutative integral, once properly normalized, recovers the Haar state independently of
the value of $a$. So up to this point our program is partially successful, since we still have a considerable freedom in choosing $a$.
On the other hand the spectral dimension $n$ depends on $a$, in particular $n = a + 1$.
We could therefore fix $a = 2$ in such a way that $n$ coincides with the classical dimension.
It turns out that this choice also arises by considering an additional requirement on the zeta function, as we now proceed to explain.

Indeed so far we have only used the information which is contained in the residue at $z=n$, that is at the spectral dimension. But the
analytic continuation of the zeta function contains much more information, as it is known from the heat kernel expansion in the classical case. In this respect the residue at $z=n$ corresponds only to the first coefficient of this expansion. Then we can look at the next non-trivial coefficient, which from the point of view of the zeta function consists in looking at another residue. It is easy to see that this coefficient vanishes for the operator obtained in the classical limit from $D_{q}$. Therefore we can require an analogue condition for the non-commutative case. It turns out that this condition is satisfied only in the case $a=2$, which was the natural choice we mentioned above.

The structure of the paper is as follows. In Section \ref{sec:integration} we give the relevant definitions for the non-commutative integral and prove an easy theorem that determines its modular group.
In Section \ref{sec:SU_q(2)} we provide the necessary background on the quantum group $SU_{q}(2)$.
In Section \ref{sec:Dirac} we describe the Dirac operator $D_{q}$ and point out some additional properties it satisfies.
In Section \ref{sec:zeta} we study in detail the family of zeta functions defined in terms of $D_{q}$ and the family of weights, which depends on the parameters $a, b \in \mathbb{R}$. We also prove that fixing the parameter $b=1$ is a necessary condition to recover
the Haar state.
In Section \ref{sec:Haar} we prove that the non-commutative integral coincides with the Haar state, independently of the value of the parameter $a$.
Finally in Section \ref{sec:scalar} we introduce an additional requirement, which is shown to be satisfied only for $a = 2$, therefore fixing completely the choice of the weight.

\section{Non-commutative integration}
\label{sec:integration}

In the description of non-commutative geometry via spectral triples a fundamental role is played by the Dirac operator $D$, which is
used to formulate many geometrical notions at the level of operator algebras. For example the notion of integration, which is our main
focus in this paper, can be expressed as the Dixmier trace of a certain
power of this operator. Indeed, in the case of a compact manifold
of dimension $n$, for any continuous function $f$ the Dixmier trace
of $f|D|^{-n}$ coincides with the usual integral, up to a normalization
constant. However computing the Dixmier trace is not an easy task
in general, and for this reason it is useful to reformulate this notion
of integration in a way which is easier to handle.

One such reformulation is achieved by defining the non-commutative
integral as a residue of a zeta function involving $D$, as is done
for example in the case of the local index formula \cite{local-index}.
Going back to the manifold case, we can define this zeta function
as $\zeta_{f}(z)=\mathrm{Tr}(f|D|^{-z})$, for $z\in\mathbb{C}$.
Then $\zeta_{f}(z)$ is holomorphic for all $\mathrm{Re}(z)>n$ and
the residue at $z=n$ coincides with the integral of $f$, again up
to a normalization constant. For general results that relate the Dixmier trace, the asymptotics of the zeta function and the heat kernel expansion see \cite{asym-zeta} (see also \cite{int-loc} for the non-unital case). The results of these papers are moreover valid in the semifinite setting, which is going to be of interest for us.

The aim of this section is to define the non-commutative integral
as the residue at the spectral dimension \cite{spec-dim} of a certain
zeta function. However, differently from the usual case, our zeta
function is not defined solely in terms of the spectrum of $D$, but
involves also the choice of a weight $\phi$. The usual setting is
recovered by taking $\phi$ equal to the operator trace. The motivation
comes from the approach of modular spectral triples, which, as we
mentioned in the introduction, allows the choice of such a weight.
Although we have in mind the specific example of $SU_{q}(2)$, the content of this section is more general in nature.

\subsection{The weight}
The two definitions of non-commutative integration that we mentioned, that is the Dixmier trace and the residue of the zeta function, rely crucially on the behaviour of the spectrum of $D$.
But in the non-commutative world, especially in the case of quantum groups, the
spectra of naturally defined operators can be very different from
their commutative counterparts. This usually spoils summability conditions,
or even the compactness condition. One way out of these problems is
to consider a Dirac operator which has the same spectrum as the classical
one, which is the idea of isospectral deformations \cite{connes-landi}, see also \cite{netu10}.
However it is clear that in this case, by remaining in the realm of
usual spectral triples, one obtains a non-commutative integral which
has the trace property.
Therefore, if we consider the case of quantum groups, we do not recover the Haar state, since it is a non-tracial state.

Another possibility is to modify the definition of spectral triple
to account for such features. One framework which goes in this direction
is that of modular spectral triples \cite{modular1,modular2,modular3}, see also \cite{kaad},
where the idea is to define the notions of compactness and summability with respect to a weight $\phi$. We now briefly recall this setting, since it is essentially the one we will have in mind for the rest
of the paper.

Let $\mathcal{N}$ be a semifinite von Neumann algebra.
We fix a faithful normal semifinite weight $\phi:\mathcal{N}_{+}\to[0,\infty]$,
with modular group $\sigma^{\phi}$. Moreover we require that
$\phi$ is strictly semifinite, which implies that $\phi$ descends
to a faithful normal semifinite trace on $\mathcal{N}^{\sigma^{\phi}}$,
the fixed point algebra. We say that an operator is compact with respect
to $\phi$ if it belongs to $\mathcal{K}(\mathcal{N}^{\sigma^{\phi}},\phi)$,
which is the smallest norm closed $*$-ideal in $\mathcal{N}^{\sigma^{\phi}}$
containing the projections $P\in\mathcal{N}^{\sigma^{\phi}}$ such
that $\phi(P)<\infty$. With these notations we can give the definition.
\begin{defn}
We call the triple $(\mathcal{A},\mathcal{H},D)$ a unital \textit{modular
$\sigma$-spectral triple}, with respect to the pair $(\mathcal{N},\phi)$,
it if satisfies the following conditions:
\begin{enumerate}
\item $\mathcal{A}$ is a unital $*$-subalgebra of $\mathcal{N}$, which
acts on a separable Hilbert space $\mathcal{H}$,
\item $D$ is a self-adjoint operator affiliated with the fixed point algebra
$\mathcal{N}^{\sigma^{\phi}}$,
\item $\mathcal{A}$ consists of analytic vectors for the modular group
$\sigma^{\phi}$,
\item $\sigma:\mathcal{A}\to\mathcal{A}$ is an algebra automorphism,
\item $[D,x]_{\sigma}=Dx-\sigma(x)D$ extends to a bounded operator for
each $x\in\mathcal{A}$, 
\item $(D^{2}+1)^{-1/2}$ is compact with respect to the weight $\phi$.
\end{enumerate}
\end{defn}
We formulate this definition as it appears in \cite{kaad}, which also
considers the possibility of a twist in the commutator. This twist,
which we denote by $\sigma$, is a priori unrelated to the modular
group $\sigma^{\phi}$. In the original papers, which also provide
some examples, the twist does not appear. An example which uses the
twist in the commutator is that of $SU_{q}(2)$ given in \cite{suq2-kaad},
which we will consider in some detail later. Other examples in this spirit, which however being non-unital do not fit in this scheme, are given in \cite{kmink1,kmink2}.

For our purposes, that is for the definition of a notion of non-commutative integration, it is sufficient to consider a stripped-down version of this framework.
We consider a $*$-algebra
$\mathcal{A}$, which is represented as bounded operators on a Hilbert
space $\mathcal{H}$, and an unbounded self-adjoint operator $D$ acting this
space. We take our weight $\phi$ to be of the form $\mathrm{Tr}(\Delta_{\phi}\cdot)$,
where $\Delta_{\phi}$ is a positive and invertible operator. This
is essentially the statement of the Radon–Nikodym theorem for semifinite
weights on von Neumann algebras.

First of all we define a zeta function in terms of $D$ and $\phi$,
and the corresponding notion of spectral dimension.
In the following we will provide definitions which are appropriate to the case of compact spaces.
We also assume for simplicity that $D$ is invertible.

\begin{defn}
The \textit{zeta function} associated to $D$ and $\phi$ is defined by
$$\zeta(z):=\phi(|D|^{-z})=\mathrm{Tr}(\Delta_{\phi}|D|^{-z}).$$
If it exists, we define the \textit{spectral dimension} to be the number
$$n:=\inf\{s>0:\zeta(s)<\infty\}.$$
\end{defn}
In the following we will assume that $\zeta(z)$ has a simple pole
at $z=n$. This condition is related to the ideals $\mathcal{Z}_{p}$
introduced in \cite{asym-zeta}.

We now want to extend the zeta function to $\mathcal{A}$, which is going to be the domain of our non-commutative integral.
As a compatibility requirement between $\mathcal{A}$ and
$\phi$ we ask that $\sigma^{\phi}(x)\in\mathcal{A}$, where $\sigma^{\phi}(x) = \Delta_{\phi}^{-1} x \Delta_{\phi}$
is the modular group of $\phi$. Notice that this condition is part
of the requirements for a modular spectral triple.

\begin{defn}
For any $x\in\mathcal{A}$ we also define the zeta function
$$\zeta_{x}(z):=\phi(x|D|^{-z})=\mathrm{Tr}(\Delta_{\phi}x|D|^{-z}).$$
The \textit{non-commutative integral} is the linear functional $\varphi: \mathcal{A} \to \mathbb{C}$ defined by
$$\varphi(x) := \res{z=n}\ \zeta_{x}(z) = \res{z=n}\ \mathrm{Tr}(\Delta_{\phi}x|D|^{-z}),$$
where $n$ is the spectral dimension.
\end{defn}

We remark that, thanks to the condition $\sigma^{\phi}(x)\in\mathcal{A}$,
the zeta function $\zeta_{x}(z)$ exists for $\mathrm{Re}(z)>n$ and
has at most a simple pole. Indeed $\Delta_{\phi}x|D|^{-z}=\sigma^{\phi}(x)\Delta_{\phi}|D|^{-z}$
and the statement easily follows by applying Hölder's inequality for
the trace.

\subsection{The modular group}
We now want to determine the modular group of the non-commutative
integral $\varphi$. By this we mean the automorphism $\theta$ such
that $\varphi(xy)=\varphi(\theta(y)x)$ for $x,y \in \mathcal{A}$. It is clear that $\theta$ is trivial in the commutative case, simply because the pointwise product of functions is commutative.
This is also true, although not trivially, if $D$ satisfies the conditions for a spectral triple.

On the other hand, by considering the setting of twisted spectral triples, where we require the twisted commutator to be bounded for some automorphism $\sigma$, we find under suitable conditions that $\varphi(xy)=\varphi(\sigma^{n}(y)x)$.
An additional twisting appears, in the setting of modular spectral triples, via the modular operator $\Delta_{\phi}$ associated to a weight $\phi$.

Here we want to take both these twistings into account and prove a theorem that, under some simplifying assumptions, determines the modular group of the non-commutative integral.
We assume that $[D,x]_{\sigma}=Dx-\sigma(x)D$ is bounded for every $x \in \mathcal{A}$, for a fixed automorphism $\sigma$, and that it satisfies a regularity property specified below.
We also consider an algebra $\mathcal{A}$ which is defined in terms of generators and relations, as for compact quantum groups, and require that $\sigma$ acts diagonally  on its generators.
These conditions can be clearly weakened, but they suffice for the examples that we have in mind.

\begin{thm}
\label{thm:modular}
With the assumptions above, consider the non-commutative integral $\varphi$ with spectral dimension $n$. Assume furthermore that $D$ satisfies the following regularity property:
\begin{itemize}
\item for some fixed $0<r\leq1$, we have that $|D|^{r}[|D|^{s},x]_{\sigma^{s}}|D|^{-s}$ is a bounded operator for every element $x\in\mathcal{A}$ and for all $s\geq n$.
\end{itemize}
Then the modular group of $\varphi$ is given by $\theta=\sigma^{\phi}\circ\sigma^{n}$.
\end{thm}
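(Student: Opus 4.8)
The plan is to relate, near $z=n$, the zeta function $\zeta_{xy}$ to the zeta function $\zeta_{\sigma^{\phi}(y)x}$ and to read off the modular group from the residues. Since $\sigma$ acts diagonally on the generators, $\mathcal{A}$ is spanned by monomials that are $\sigma$-eigenvectors, so by linearity it suffices to treat $\sigma(y)=\mu y$ with $\mu>0$; then $\sigma^{s}(y)=\mu^{s}y$ for every real $s$ and $\sigma^{n}(y)=\mu^{n}y$, with $\mu^{s},\mu^{n}$ plain positive scalars, and $\theta(y)=\sigma^{\phi}(\sigma^{n}(y))=\mu^{n}\sigma^{\phi}(y)$.

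Throughout I would work in the region $\mathrm{Re}(z)>n$, where every operator below is trace class by the Hölder estimates already used in the remark following the definition of $\varphi$. Cyclicity of the trace together with $y\Delta_{\phi}=\Delta_{\phi}\sigma^{\phi}(y)$ (a rewriting of $\sigma^{\phi}(y)=\Delta_{\phi}^{-1}y\Delta_{\phi}$) gives the preliminary identity
$$\mathrm{Tr}\!\left(\Delta_{\phi}x|D|^{-z}y\right)=\mathrm{Tr}\!\left(y\Delta_{\phi}x|D|^{-z}\right)=\mathrm{Tr}\!\left(\Delta_{\phi}\sigma^{\phi}(y)x|D|^{-z}\right)=\zeta_{\sigma^{\phi}(y)x}(z).$$
Then I introduce $G(z):=\zeta_{xy}(z)-\mu^{z}\,\zeta_{\sigma^{\phi}(y)x}(z)$. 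Both $\zeta_{xy}$ and $\zeta_{\sigma^{\phi}(y)x}$ are holomorphic for $\mathrm{Re}(z)>n$ with at most a simple pole at $z=n$ (by the remark, since $\sigma^{\phi}(xy)\in\mathcal{A}$ and $\sigma^{\phi}(\sigma^{\phi}(y)x)\in\mathcal{A}$) and $\mu^{z}$ is entire, so $G$ too has at most a simple pole at $z=n$, and $\res{z=n}\bigl[\mu^{z}\zeta_{\sigma^{\phi}(y)x}(z)\bigr]=\mu^{n}\varphi(\sigma^{\phi}(y)x)$. Hence the assertion $\varphi(xy)=\varphi(\theta(y)x)=\mu^{n}\varphi(\sigma^{\phi}(y)x)$ is equivalent to $\res{z=n}G(z)=0$.

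To prove $\res{z=n}G(z)=0$ I would restrict attention to real $s>n$, so that every exponent is real. Commuting $y$ through $|D|^{-s}$ yields $y|D|^{-s}=\mu^{s}|D|^{-s}y+|D|^{-s-r}B_{s}$, with $B_{s}:=|D|^{r}[|D|^{s},y]_{\sigma^{s}}|D|^{-s}$ bounded by the regularity hypothesis; combined with the preliminary identity this gives $G(s)=\mathrm{Tr}(\Delta_{\phi}x|D|^{-s-r}B_{s})$ and therefore
$$|G(s)|\ \leq\ \|B_{s}\|\cdot\|\Delta_{\phi}x|D|^{-s-r}\|_{1}.$$
As $s\to n^{+}$ the trace norm on the right stays finite because $s+r>n$ keeps us away from the pole of the zeta function---this is where $r>0$ enters decisively---and $\|B_{s}\|$ is locally bounded near $s=n$; hence $G(s)$ remains bounded as $s\to n^{+}$. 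A function that has at most a simple pole at $z=n$ but stays bounded along a ray approaching $n$ has no pole there, so $\res{z=n}G(z)=0$. This gives $\varphi(xy)=\mu^{n}\varphi(\sigma^{\phi}(y)x)=\varphi(\theta(y)x)$, and by linearity the identity passes from $\sigma$-eigenvectors to arbitrary $y\in\mathcal{A}$, so $\theta=\sigma^{\phi}\circ\sigma^{n}$ is the modular group of $\varphi$.

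The algebraic steps---peeling $\sigma^{n}$ off $y$ by commuting past $|D|^{-s}$ (which produces the scalar $\mu^{s}$, evaluated at $\mu^{n}$ in the residue) and peeling $\sigma^{\phi}$ off by commuting past $\Delta_{\phi}$ under the trace---are routine. The main obstacle I anticipate is the analytic bookkeeping: confirming that $G$ genuinely has at most a simple pole at $z=n$ (which relies on the hypothesis on $\zeta$ and the ideals $\mathcal{Z}_{p}$), justifying the trace manipulations for $\mathrm{Re}(z)>n$, and, most delicately, controlling $\|B_{s}\|$ and $\|\Delta_{\phi}x|D|^{-s-r}\|_{1}$ locally near $s=n$ so that $G(s)$ cannot blow up along the real axis. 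The gain $|D|^{-r}$ in the regularity hypothesis, with $0<r\leq1$, is exactly what is needed to make this last estimate close.
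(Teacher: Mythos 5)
Your proposal is correct and follows essentially the same route as the paper's proof: the key identity $y|D|^{-s}-\mu^{s}|D|^{-s}y=|D|^{-(s+r)}\cdot|D|^{r}[|D|^{s},y]_{\sigma^{s}}|D|^{-s}$, the H\"older estimate exploiting the gain $|D|^{-r}$ to land at $s+r>n$ where the trace norm is finite, the reduction to $\sigma$-eigenvectors, and the final cyclic permutation via $y\Delta_{\phi}=\Delta_{\phi}\sigma^{\phi}(y)$ are all exactly the paper's steps. The only cosmetic difference is that you absorb the paper's intermediate replacement of $\sigma^{s}(y)$ by $\sigma^{n}(y)$ into the entire factor $\mu^{z}$ multiplying a simple pole, which is a clean way to package the same estimate.
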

\begin{proof}
To prove the result we show that the following chain of equalities
holds
\begin{equation*}
\begin{split}
\varphi(xy) & = \res{s=n}\ \mathrm{Tr}(\Delta_{\phi}xy|D|^{-s})\\
 & = \res{s=n}\ \mathrm{Tr}(\Delta_{\phi}x|D|^{-s}\sigma^{s}(y))\\
 & = \res{s=n}\ \mathrm{Tr}(\Delta_{\phi}x|D|^{-s}\sigma^{n}(y))\\
 & = \res{s=n}\ \mathrm{Tr}(\Delta_{\phi}\sigma^{\phi}(\sigma^{n}(y))x|D|^{-s})=\varphi(\theta(y)x).
\end{split}
\end{equation*}
Let us start with the first one. Consider the expression
$$\mathrm{Tr}(\Delta_{\phi}xy|D|^{-s})-\mathrm{Tr}(\Delta_{\phi}x|D|^{-s}\sigma^{s}(y)) = \mathrm{Tr}(\Delta_{\phi}x|D|^{-s}[|D|^{s},y]_{\sigma^{s}}|D|^{-s}),$$
where we have used the identity
$$
y|D|^{-s}-|D|^{-s}\sigma^{s}(y)=|D|^{-s}[|D|^{s},y]_{\sigma^{s}}|D|^{-s}.
$$
Now consider $0<r\leq1$ as in the statement. Using Hölder's inequality we have
\begin{equation*}
\begin{split}
\left|\mathrm{Tr}\left(\Delta_{\phi}x|D|^{-s}[|D|^{s},y]_{\sigma^{s}}|D|^{-s}\right)\right| & =\left|\mathrm{Tr}\left(\Delta_{\phi}x|D|^{-(s+r)}|D|^{r}[|D|^{s},y]_{\sigma^{s}}|D|^{-s}\right)\right|\\
 & \leq\mathrm{Tr}\left(|\Delta_{\phi}x|D|^{-(s+r)}|\right)\left\Vert |D|^{r}[|D|^{s},y]_{\sigma^{s}}|D|^{-s}\right\Vert.
\end{split}
\end{equation*}
By assumption $\||D|^{r}[|D|^{s},y]_{\sigma^{s}}|D|^{-s}\|$ is finite
for all $s>n$. Therefore, since
$$\res{s=n}\ \mathrm{Tr}(\Delta_{\phi}x|D|^{-(s+r)})=0,$$
we find that the residue at $s=n$ of this term vanishes, hence we have
$$\varphi(xy)= \res{s=n}\ \mathrm{Tr}(\Delta_{\phi}xy|D|^{-s}) = \res{s=n}\ \mathrm{Tr}(\Delta_{\phi}x|D|^{-s}\sigma^{s}(y)).$$

The second step consists in proving that
$$\res{s=n}\ \mathrm{Tr}(\Delta_{\phi}x|D|^{-s}\sigma^{s}(y))
= \res{s=n}\  \mathrm{Tr}(\Delta_{\phi}x|D|^{-s}\sigma^{n}(y)).$$
Subtracting these quantities and using Hölder's inequality we find
$$\left|\mathrm{Tr}\left(\Delta_{\phi}x|D|^{-s}(\sigma^{s}(y)-\sigma^{n}(y))\right)\right|\leq\mathrm{Tr}\left(|\Delta_{\phi}x|D|^{-s}|\right)\|\sigma^{s}(y)-\sigma^{n}(y)\|.$$
Since $\sigma$ acts diagonally on the generators of $\mathcal{A}$, the element $y \in \mathcal{A}$ can be written as a finite sum of homogeneous elements with respect to $\sigma$.
Therefore we can consider, without loss of generality, that $\sigma(y)=\lambda y$ for some $\lambda$. Therefore we have
$\|\sigma^{s}(y)-\sigma^{n}(y)\|=|\lambda^{s}-\lambda^{n}|\|y\|$.
The residue at $s=n$ of this quantity vanishes, so we get 
$$\res{s=n}\ \left|\mathrm{Tr}\left(\Delta_{\phi}x|D|^{-s}(\sigma^{s}(y)-\sigma^{n}(y)\right)\right|
\leq \res{s=n}\ \mathrm{Tr}\left(|\Delta_{\phi}x|D|^{-s}|\right)|\lambda^{s}-\lambda^{n}|\|y\|
= 0.$$

For the last step have to show that
$$\res{s=n}\ \mathrm{Tr}(\Delta_{\phi}x|D|^{-s}\sigma^{n}(y))
= \res{s=n}\ \mathrm{Tr}(\Delta_{\phi}\sigma^{\phi}(\sigma^{n}(y))x|D|^{-s}).$$
But this immediately follows from the trace property and the property
of the modular operator $\sigma^{n}(y)\Delta_{\phi}=\Delta_{\phi}\sigma^{\phi}(\sigma^{n}(y))$.
Finally, putting all the steps together and denoting $\theta=\sigma^{\phi}\circ\sigma^{n}$,
we have that $\varphi(xy)=\varphi(\theta(y)x)$. The proof is complete.
\end{proof}
Let us now consider the case in which $\mathcal{A}$ is the coordinate
algebra of a compact quantum group, with its Haar state $h$ having
the modular group $\vartheta$. We can use this theorem as a criterion
to check if the non-commutative integral $\varphi$ coincides with
the Haar state $h$. Indeed a necessary condition for this to happen
is that the modular group $\theta$ of $\varphi$ coincides with $\vartheta$.

The strategy is the following: given an operator $D$ and a weight
$\phi$ we consider the associated zeta function and compute its spectral
dimension which, if it exists, we denote by $n$. Then if $D$ satisfies
the conditions of Theorem \ref{thm:modular}, we check if the modular
group $\theta=\sigma^{\phi}\circ\sigma^{n}$ coincides with the modular
group $\vartheta$ of the Haar state. In the rest of the paper we
will perform this analysis for the case of the quantum group $SU_{q}(2)$,
using the Dirac operator $D_{q}$ introduced in \cite{suq2-kaad}.
We will also briefly mention the case of the Podle\'{s} sphere.

\section{Background on $SU_{q}(2)$}
\label{sec:SU_q(2)}

In this section we provide some background on the quantum group $SU_{q}(2)$,
which we will be our focus in the rest of the paper. We use the notations
and conventions of the book by Klimyk and Schmüdgen \cite{KS}. For
$0<q<1$ we denote by $\mathcal{A}:=\mathcal{O}(SU_{q}(2))$ the unital
Hopf $*$-algebra with generators $a,b,c,d$ satisfying the relations
\[
\begin{gathered}ab=qba\ ,\quad ac=qca\ ,\quad bd=qdb\ ,\quad cd=qdc\ ,\quad bc=cb\ ,\\
ad=1+qcb\ ,\quad da=1+q^{-1}bc\ ,
\end{gathered}
\]
with the usual Hopf algebra structure and the involution given by
\[
a^{*}=d\ ,\quad b^{*}=-qc\ ,\quad c^{*}=-q^{-1}b\ ,\quad d^{*}=a\ .
\]
For each $l\in\frac{1}{2}\mathbb{N}_{0}$ there is a unique (up to
unitary equivalence) irreducible corepresentation $V_{l}$ of the
coalgebra $\mathcal{A}$ of dimension $2l+1$. If we fix a vector
space basis in each $V_{l}$, and denote by $t_{i,j}^{l}\in\mathcal{A}$
the corresponding matrix coefficients, then we have the following
analogue of the Peter-Weyl theorem: the set $\{t_{i,j}^{l}\in\mathcal{A}:l\in\frac{1}{2}\mathbb{N}_{0},\ -l\leq i,j\leq l\}$
is a vector space basis of $\mathcal{A}$. With a suitable choice
of basis in $V_{1/2}$ one has
\[
a=t_{-1/2,-1/2}^{1/2}\ ,\quad b=t_{-1/2,1/2}^{1/2}\ ,\quad c=t_{1/2,-1/2}^{1/2}\ ,\quad d=t_{1/2,1/2}^{1/2}\ .
\]

We also need to consider the quantized enveloping algebra $U_{q}(\mathfrak{sl}(2))$.
This is a Hopf algebra generated by $k,k^{-1},e,f$ with relations
\[
kk^{-1}=k^{-1}k=1\ ,\quad ke=qek\ ,\quad kf=q^{-1}ek\ ,\quad[e,f]=\frac{k^{2}-k^{-2}}{q-q^{-1}}\ .
\]
It carries the Hopf algebra structure
\[
\begin{gathered}\Delta(k)=k\otimes k\ ,\quad\Delta(e)=e\otimes k+k^{-1}\otimes e\ ,\quad\Delta(f)=f\otimes k+k^{-1}\otimes f\ ,\\
S(k)=k^{-1}\ ,\quad S(e)=-qe\ ,\quad S(f)=-q^{-1}f\ ,\\
\varepsilon(k)=1\ ,\quad\varepsilon(e)=\varepsilon(f)=0\ .
\end{gathered}
\]
It becomes a Hopf $*$-algebra, which we denote by $U_{q}(\mathfrak{su}(2))$,
by adding the involution
\[
k^{*}=k\ ,\quad e^{*}=f\ ,\quad f^{*}=e\ .
\]
There is a dual pairing between the Hopf algebras $U_{q}(\mathfrak{sl}(2))$
and $\mathcal{A}$, which we denote by $\langle\cdot,\cdot\rangle$.
This pairing is used to define the left and right actions of $U_{q}(\mathfrak{sl}(2))$
on $\mathcal{A}$ by the formulae
\[
g\triangleright x:=x_{(1)}\langle g,x_{(2)}\rangle\ ,\quad g\triangleleft x:=\langle g,x_{(1)}\rangle x_{(2)}\ ,\qquad x\in\mathcal{A}\ ,\ g\in U_{q}(\mathfrak{sl}(2))\ ,
\]
where we used Sweedler's notation for the coproduct. These actions
make $\mathcal{A}$ into a $U_{q}(\mathfrak{sl}(2))$-bimodule. For
the actions of the generators on the basis $t_{i,j}^{l}$ we have
\[
\begin{gathered}k\triangleright t_{i,j}^{l}=q^{j}t_{i,j}^{l}\ ,\qquad t_{i,j}^{l}\triangleleft k=q^{i}t_{i,j}^{l}\ ,\\
e\triangleright t_{i,j}^{l}=\sqrt{[l+1/2]^{2}-[j+1/2]^{2}}t_{i,j+1}^{l}\ ,\quad f\triangleright t_{i,j}^{l}=\sqrt{[l+1/2]^{2}-[j-1/2]^{2}}t_{i,j-1}^{l}\ .
\end{gathered}
\]
In the previous formulae we have used the $q$-numbers $[x]_{q}$,
which are defined as $[x]_{q}:=(q^{-x}-q^{x})/(q^{-1}-q)$. In the
following we will also use the notation
\[
\partial_{k}:=k\triangleright\ ,\quad\partial_{e}:=e\triangleright\ ,\quad\partial_{f}:=f\triangleright\ ,
\]
especially as operators acting on a suitable completion of $\mathcal{A}$.
Observe that, since $\Delta(k^{n})=k^{n}\otimes k^{n}$ for every
$n\in\mathbb{Z}$, we have that $k^{n}\triangleright$ and $\triangleleft k^{n}$
are algebra automorphisms on $\mathcal{A}$.

\subsection{The Haar state}

We denote by $A:=C^{*}(SU_{q}(2))$ the universal $C^{*}$-completion
of the $*$-algebra $\mathcal{A}$. Let $h$ be the Haar state of
$A$ whose values on the basis elements are
\[
h(a^{i}b^{j}c^{k})=h(d^{i}b^{j}c^{k})=\delta_{i0}\delta_{jk}(-1)^{k}[k+1]^{-1}\ ,\quad h(t_{i,j}^{l})=\delta_{l0}\ .
\]
Let $\mathcal{H}_{h}$ denote the GNS space $L^{2}(A,h)$, where the
inner product is defined by $(x,y):=h(x^{*}y)$. The representation
of $A$ on $\mathcal{H}_{h}$ is induced by left multiplication in
$A$. The set $\{t_{i,j}^{l}\in\mathcal{A}:l\in\frac{1}{2}\mathbb{N}_{0},\ -l\leq i,j\leq l\}$
of matrix coefficients is an orthogonal basis for $\mathcal{H}_{h}$,
with
\[
(t_{i,j}^{l},t_{i^{\prime},j^{\prime}}^{l^{\prime}})=\delta_{l,l^{\prime}}\delta_{i,i^{\prime}}\delta_{j,j^{\prime}}q^{-2i}[2l+1]^{-1}\ .
\]
We also introduce the orthonormal basis $\xi_{i,j}^{l}:=t_{i,j}^{l}/\sqrt{q^{-2i}[2l+1]^{-1}}$.

The Haar state does not satisfy the trace property, but instead we
have $h(xy)=h(\vartheta(y)x)$ where $\vartheta(x)=k^{-2}\triangleright x\triangleleft k^{-2}$.
In particular on the generators we have
\[
\vartheta(a)=q^{2}a\ ,\quad\vartheta(b)=b\ ,\quad\vartheta(c)=c\ ,\quad\vartheta(d)=q^{-2}d\ .
\]
It follows from the theory of compact quantum groups that the Haar
state extends to a KMS state on the $C^{*}$-algebra $A$ for the
strongly continuous one-parameter group $\vartheta_{t}$, given by
\[
\vartheta_{t}(a)=q^{-2it}a\ ,\quad\vartheta_{t}(b)=b\ ,\quad\vartheta_{t}(c)=c\ ,\quad\vartheta_{t}(d)=q^{2it}d\ .
\]
This action can be analytically extended, and we recover the modular
group $\vartheta$ of the Haar state as $\vartheta=\vartheta_{i}$.
In particular, the associated modular operator $\Delta_{F}$ can be
written as $\Delta_{F}=\Delta_{L}\Delta_{R}$, where $\Delta_{L}$
and $\Delta_{R}$ are the left and right modular operators defined
by
\[
\Delta_{L}(t_{i,j}^{l})=q^{2j}t_{i,j}^{l}\ ,\qquad\Delta_{R}(t_{i,j}^{l})=q^{2i}t_{i,j}^{l}\ .
\]
These modular operators also implement one parameters groups of automorphisms,
which are given by $\sigma_{L,t}(x)=\Delta_{L}^{it}x\Delta_{L}^{-it}$
and $\sigma_{R,t}(x)=\Delta_{R}^{it}x\Delta_{R}^{-it}$. They can
be extended to complex actions, and we denote their extensions at
$z=i$ by $\sigma_{L}$ and $\sigma_{R}$. Restricted to $\mathcal{A}$,
they coincide with the left and right action of $k^{-2}$, that is
we have
\[
\sigma_{L}(x)=k^{-2}\triangleright x\ ,\qquad\sigma_{R}(x)=x\triangleleft k^{-2}\ .
\]
Finally we note that the modular group $\vartheta$ of the Haar state
can be rewritten in terms of these automorphisms as $\vartheta=\sigma_{L}\circ\sigma_{R}$.

\subsection{A decomposition}

We now consider a decomposition of the algebra $\mathcal{A}$ and
the Hilbert space $\mathcal{H}_{h}$ which has a particular geometrical
significance \cite{local-sphere}. For $n\in\mathbb{Z}$ define
\[
\mathcal{A}_{n}=\{x\in\mathcal{A}:\sigma_{L,t}(x)=q^{int}x\}\ .
\]
Then we have the decomposition $\mathcal{A}=\bigoplus_{n\in\mathbb{Z}}\mathcal{A}_{n}$.
The norm closure of $\mathcal{A}_{n}$, which we denote by $A_{n}$,
is the analogue of the space of continuous sections of the line bundle
over the sphere with winding number $n$. In particular the fixed
point algebra under the left action on $\mathcal{A}$, that is the
space $\mathcal{A}_{0}$, is isomorphic to the standard Podle\'{s}
sphere. This algebra decomposition can be extended to a Hilbert space
decomposition. If we denote by $\mathcal{H}_{n}=L^{2}(A_{n},h)$ the
GNS space corresponding to $A_{n}$, then we have
\[
\mathcal{H}_{h}=\bigoplus_{n\in\mathbb{Z}}\mathcal{H}_{n}\ .
\]

\section{The Dirac operator $D_{q}$}
\label{sec:Dirac}
We now turn our attention to the description of the quantum group
$SU_{q}(2)$ in the framework of spectral triples. In particular we
will consider the spectral triple introduced in \cite{suq2-kaad},
which is an example which fits into the framework of modular spectral
triples \cite{kaad}. We will focus our attention mainly on the Dirac
operator $D_{q}$, which is defined as
\[
D_{q}=\left(\begin{array}{cc}
(q^{-1}-q)^{-1}(q\partial_{k^{-2}}-1) & q^{-1/2}\partial_{e}\partial_{k^{-1}}\\
q^{1/2}\partial_{f}\partial_{k^{-1}} & (q^{-1}-q)^{-1}(1-q^{-1}\partial_{k^{-2}})
\end{array}\right)\ .
\]
It acts on the Hilbert space $\mathcal{H}_{h} \oplus \mathcal{H}_{h}$, where $\mathcal{H}_{h}$ is the GNS space constructed using the Haar
state in the previous section. The Dirac operator $D_{q}$ satisfies
some interesting properties, which we summarize in the following proposition
\cite{suq2-kaad}.
\begin{prop}
\label{prop:Dq}Let $D_{q}$ be the Dirac operator given above. Then:
\begin{enumerate}
\item the twisted commutator $[D_{q},x]_{\sigma_{L}}=D_{q}x-\sigma_{L}(x)D_{q}$
is bounded,
\item the twisted commutator is Lipschitz regular, that is $[|D_{q}|,x]_{\sigma_{L}}$
is bounded,
\item we have $D_{q}^{2}=\chi^{-1}\Delta_{L}^{-1}C_{q}$, where $C_{q}$
is the Casimir of $SU_{q}(2)$ and $\chi=\left(\begin{array}{cc}
q^{-1} & 0\\
0 & q
\end{array}\right)$.
\end{enumerate}
\end{prop}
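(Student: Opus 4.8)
I would separate the three items, since they are of two different kinds, and carry them out in the order (1), (3), (2). Items (1) and (3) are essentially computations: (1) follows at once from the twisted Leibniz rules obeyed by the derivations $\partial_e,\partial_f,\partial_k$, and (3) is a matrix multiplication combined with the defining relations of $U_q(\mathfrak{sl}(2))$. Item (2) is the delicate one and I would derive it last, using the explicit form of $|D_q|$ coming out of (3) together with the "smoothness" of $D_q$ established along the way in the proof of (1); this is also the point where I would lean on \cite{suq2-kaad}.

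For (1), the coproducts $\Delta(k^{\pm1})=k^{\pm1}\otimes k^{\pm1}$, $\Delta(e)=e\otimes k+k^{-1}\otimes e$ and $\Delta(f)=f\otimes k+k^{-1}\otimes f$ translate, on the GNS space with $\mathcal{A}$ acting by left multiplication $L_x$, into the operator identities $\partial_{k^{\pm1}}L_x=L_{\partial_{k^{\pm1}}x}\partial_{k^{\pm1}}$ and $\partial_g L_x=L_{\partial_g x}\partial_k+L_{\partial_{k^{-1}}x}\partial_g$ for $g\in\{e,f\}$. Since $\sigma_L(x)=\partial_{k^{-2}}x$ and both summands of $\mathcal{H}_h\oplus\mathcal{H}_h$ carry the diagonal action of $\mathcal{A}$, I would simply compute the four entries of the $2\times2$ operator $[D_q,x]_{\sigma_L}=D_qL_x-L_{\sigma_L(x)}D_q$. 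In each entry the unbounded factor drops out: for the diagonal entries one uses $q\,\partial_{k^{-2}}L_x=q\,L_{\partial_{k^{-2}}x}\partial_{k^{-2}}$, and for the off-diagonal entries one uses that the ``$\partial_{k^{-1}}$-leg'' of the Leibniz rule combines with the $\partial_{k^{-1}}$ inside $D_q$ to produce precisely $\partial_{k^{-2}}$, which is then absorbed by $L_{\sigma_L(x)}D_q$. What remains is a $2\times2$ matrix all of whose entries are left multiplications by elements of $\mathcal{A}$ (up to scalars, by $\partial_{k^{-2}}x-x$, $\partial_e\partial_{k^{-1}}x$, $\partial_f\partial_{k^{-1}}x$ and $x-\partial_{k^{-2}}x$), hence bounded. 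The point I would stress is that the twist $\sigma_L$ is not arbitrary: it is exactly the one making these cancellations work.

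For (3) I would multiply out $D_q^2$ entrywise, using $\partial_g\partial_h=\partial_{gh}$ to turn the relations of $U_q(\mathfrak{sl}(2))$ into the operator identities $\partial_{k^{-2}}\partial_e=q^{-2}\partial_e\partial_{k^{-2}}$, $\partial_{k^{-2}}\partial_f=q^{2}\partial_f\partial_{k^{-2}}$ and $\partial_e\partial_f-\partial_f\partial_e=(q-q^{-1})^{-1}(\partial_{k^2}-\partial_{k^{-2}})$. After commuting all $\partial_{k^{-1}}$ factors to the right one finds that the off-diagonal entries of $D_q^2$ vanish identically, while the $(1,1)$ and $(2,2)$ entries become $q\,\partial_{k^{-2}}C_q$ and $q^{-1}\partial_{k^{-2}}C_q$, where $C_q$ is the left action of the Casimir and $\partial_{k^{-2}}=\Delta_L^{-1}$; reading off the scalar factors as $\chi^{-1}=\mathrm{diag}(q,q^{-1})$ gives $D_q^2=\chi^{-1}\Delta_L^{-1}C_q$. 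Concretely this collapses to a single identity between $q$-numbers, which one checks either directly or by evaluating both sides on a highest-weight vector $t^l_{i,l}$, on which $\partial_e$ acts by zero and $C_q$ is the scalar $[l+1/2]^2$. For (2) I would then observe that, since $\chi$, $\Delta_L$ and $C_q$ mutually commute and are positive, $|D_q|=\chi^{-1/2}\Delta_L^{-1/2}C_q^{1/2}$, an explicit positive operator, diagonal in the orthonormal basis $\xi^l_{i,j}$. Boundedness of $[|D_q|,x]_{\sigma_L}$ cannot follow from a single Leibniz manipulation, because $C_q^{1/2}$ is not polynomial in the $\partial$'s; instead I would note that the computation in (1) shows every iterated twisted commutator $[D_q,\cdot]_{\sigma_L}$ sends a matrix of left multiplications to another such matrix, so $D_q$ is ``smooth'', and then invoke the usual argument — represent $|D_q|$ as a Cauchy (or heat-kernel) integral in $D_q^2$ and commute $x$ through — whose uniform-in-parameter estimates are the genuinely technical input. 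I expect exactly those estimates, carried out in \cite{suq2-kaad}, to be the main obstacle; everything else is bookkeeping with coproducts and $q$-numbers.
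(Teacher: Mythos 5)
Your proposal is essentially correct, and for items (1) and (3) it matches what the paper actually does: the paper states Proposition \ref{prop:Dq} without proof, citing \cite{suq2-kaad}, but your entrywise computation of $[D_{q},x]_{\sigma_{L}}$ reproduces exactly the formula the paper derives later in the proof of the differential-calculus proposition (diagonal entries $\pm(q^{-1}-q)^{-1}(\partial_{k^{-2}}(x)-x)$, off-diagonal entries $q^{\mp1/2}\partial_{e/f}\partial_{k^{-1}}(x)$), and your identification of the $(1,1)$ and $(2,2)$ entries of $D_{q}^{2}$ as $q\,\Delta_{L}^{-1}C_{q}$ and $q^{-1}\Delta_{L}^{-1}C_{q}$ checks out on the basis $t_{i,j}^{l}$. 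The one place where your route genuinely diverges is item (2): you propose representing $|D_{q}|$ as a Cauchy or heat-kernel integral in $D_{q}^{2}$ and commuting $x$ through, whereas the reference (and the paper's own Lemma \ref{lem:regular}, which proves a stronger variant) exploits the fact that $|D_{q}|$ is explicitly diagonal, $|D_{q}|\xi_{i,j}^{l}=q^{\pm1/2}q^{-j}[l+1/2]\xi_{i,j}^{l}$, and bounds the matrix elements of $[|D_{q}|,a]_{\sigma_{L}}$ and $[|D_{q}|,c]_{\sigma_{L}}$ directly using the decay estimates $\alpha_{i,j}^{l+},\gamma_{i,j}^{l+}\leq Cq^{l+j}$ on the coefficients of the GNS representation. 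Your integral-representation approach could be made to work, but the direct matrix-element estimate is both more elementary here and is the one actually needed later (the same estimates drive Lemma \ref{lem:regular} and the approximation lemma in Section \ref{sec:Haar}), so you may as well set it up that way from the start; otherwise your outline is sound.
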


It is interesting to point out the relation with the Dirac operator
for the Podle\'{s} sphere, which has been introduced in \cite{DS-pod},
see also \cite{local-sphere}. In our notation, the Hilbert space
for the spectral triple associated to the Podle\'{s} sphere is given
by $\mathcal{H}_{1}\oplus\mathcal{H}_{-1}$. This Hilbert space sits
inside $\mathcal{H}_{h}\oplus\mathcal{H}_{h}$, since we have the
decomposition $\mathcal{H}_{h}=\bigoplus_{n\in\mathbb{Z}}\mathcal{H}_{n}$.
If we restrict the Dirac operator $D_{q}$ to this subspace we obtain
\[
D_{q}|_{\mathcal{H}_{1}\oplus\mathcal{H}_{-1}}=\left(\begin{array}{cc}
(q^{-1}-q)^{-1}(qq^{-1}-1) & q^{-1/2}q^{1/2}\partial_{e}\\
q^{1/2}q^{-1/2}\partial_{f} & (q^{-1}-q)^{-1}(1-q^{-1}q)
\end{array}\right)=\left(\begin{array}{cc}
0 & \partial_{e}\\
\partial_{f} & 0
\end{array}\right)\ .
\]
Therefore it reproduces the usual Dirac operator for the Podle\'{s}
sphere, which makes it a natural object to consider. We also point
out that, since the Podle\'{s} sphere corresponds to the fixed point
algebra of $\mathcal{A}$ under the left action, it follows that the
twisted commutator condition $[D_{q},x]_{\sigma_{L}}$ reduces to
the usual one.

\subsection{The left-covariant differential calculus}

An interesting feature of the operator $D_{q}$, which has not been pointed out in \cite{suq2-kaad},
is that it implements one of the left-covariant differential calculi on $SU_{q}(2)$.
We will now show that the calculus defined by $D_{q}$ is isomorphic to the number $10$ of the list given in \cite{hecken}, where a complete classification of left-covariant differential calculi on $SU_{q}(2)$ is obtained.
This particular calculus has been considered previously, in the context of twisted spectral triples, in the paper \cite{twisted-calculi}, where it appears as an example of a more general framework.
However, the operator $D_{q}$ that we consider is slightly different from the one that appears there.

\begin{prop}
The differential calculus implemented by $D_{q}$ is isomorphic to
one of the left covariant differential calculi on $SU_{q}(2)$.\end{prop}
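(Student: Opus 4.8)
The strategy is to read off the first-order differential calculus canonically associated to $D_q$ and then match its defining data against Heckenberger's classification list. The universal first-order differential calculus on $\mathcal{A}$ is $\Omega^1_u(\mathcal{A})$ with $d_u x = 1\otimes x - x\otimes 1$; a general left-covariant calculus is a quotient by a sub-bimodule, equivalently (by Woronowicz's theory) it is determined by a right ideal $\mathcal{R}\subset\ker\varepsilon\subset\mathcal{A}$, or dually by a subspace of the quantized enveloping algebra spanned by a collection of ``$q$-deformed partial derivatives'' together with the corresponding functionals. For $D_q$ the relevant one-forms are generated by the twisted commutators $[D_q,x]_{\sigma_L}$, and Proposition \ref{prop:Dq} tells us these are bounded, so this is well-defined; concretely the calculus is the $\mathcal{A}$-bimodule generated inside $B(\mathcal{H}_h\oplus\mathcal{H}_h)$ by $\{[D_q,x]_{\sigma_L} : x\in\mathcal{A}\}$ with left action $x\cdot\omega = \sigma_L(x)\omega$ and differential $dx = [D_q,x]_{\sigma_L}$.

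First I would compute $[D_q,x]_{\sigma_L}$ explicitly on the generators $a,b,c,d$. Using $D_q^2 = \chi^{-1}\Delta_L^{-1}C_q$ from part (3) and the fact that $D_q$ is built from the operators $\partial_e,\partial_f,\partial_{k^{\pm 1}}$, the twisted commutator with $\sigma_L = k^{-2}\triangleright$ will produce a $2\times 2$ matrix whose entries are again built out of the left actions of $e,f,k^{\pm1}$ (this is essentially the content of part (1), that the twist exactly absorbs the ``bad'' term). Collecting the independent operator-valued one-forms appearing, I expect to find a three-dimensional space of left-invariant one-forms $\omega_+,\omega_-,\omega_0$ (matching the classical $\dim SU(2)=3$ and the structure of the Podleś-sphere restriction, where only the two off-diagonal forms survive), together with the explicit commutation rules $\omega_\pm a = q^{?} a\omega_\pm$ etc. governing the bimodule structure. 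These commutation constants, together with the action of $d$ on generators, are precisely the invariants Heckenberger uses to label the calculi.

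Second, I would extract from this data the associated right ideal $\mathcal{R}\subseteq\ker\varepsilon$ — equivalently the ``quantum tangent space'' inside $U_q(\mathfrak{sl}(2))$ — and compare it with the entry numbered $10$ in the list of \cite{hecken}. This is a finite check: one identifies the dimension of the cotangent space, the eigenvalues of $\triangleleft k$ (the $\mathbb{Z}$-grading / winding numbers of the basis one-forms), and the precise linear combination of $e$, $f$, $k^{\pm1}$, $1$ spanning the tangent space, and verifies they agree with those tabulated. Since Heckenberger's classification is exhaustive and these invariants determine the calculus up to isomorphism, matching them proves the claim; and comparing with the slightly different operator of \cite{twisted-calculi} would explain the discrepancy mentioned in the text.

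The main obstacle I anticipate is purely bookkeeping but genuinely delicate: correctly tracking the powers of $q$, the $q^{\pm 1/2}$ prefactors and the $\Delta_L^{-1}$-twist through the $2\times 2$ matrix computation of $[D_q,x]_{\sigma_L}$, so that the resulting bimodule relations come out in exactly the normalization used in \cite{hecken}. A secondary subtlety is making sure the three one-forms produced by $D_q$ are linearly independent as abstract left-invariant forms (and not, say, only two of them, with the diagonal entry of $D_q$ contributing nothing new) — one must check the diagonal term $(q^{-1}-q)^{-1}(q\partial_{k^{-2}}-1)$ really yields an independent $\omega_0$ after the twist, rather than collapsing. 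Once the explicit relations are in hand, the comparison with the classification is routine.
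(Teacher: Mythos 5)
Your proposal follows essentially the same route as the paper: compute the twisted commutator $[D_{q},x]_{\sigma_{L}}$ explicitly (it is built from the three operators $\partial_{k^{-2}}-1$, $\partial_{e}\partial_{k^{-1}}$, $\partial_{f}\partial_{k^{-1}}$, giving the three left-invariant one-forms you anticipate), and then identify the resulting calculus with number $10$ in Heckenberger's list by comparison with the Kr\"ahmer--Wagner calculus. The paper likewise leaves the final bookkeeping comparison to the cited references, so your plan matches its proof in both structure and level of detail.
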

\begin{proof}
To prove this statement recall that two first-order differential calculi
$(\Omega_{1}^{1},d_{1})$ and $(\Omega_{2}^{1},d_{2})$ are isomorphic
if and only if $\sum_{j}a_{j}d_{1}b_{j}=0$ always implies that $\sum_{j}a_{j}d_{2}b_{j}=0$.
For a twisted spectral triple we can realize a differential calculus
in the following way: we define $\Omega_{D}^{1}$ to be the span of
operators of the form $a\cdot[D,b]_{\sigma}$ with bimodule structure
given by $a\cdot[D,b]_{\sigma}\cdot c=\sigma(a)[D,b]_{\sigma}c$,
where $a,b,c\in\mathcal{A}$. Then it is easy to check that $d_{\sigma}(a)=[D,a]_{\sigma}$
defines a derivation with values in $\Omega_{D}^{1}$, see also \cite{twisted-calculi}.

To proceed we compute the twisted commutator of $D_{q}$ with $x \in \mathcal{A}$.
A simple computation, which uses the coproduct structure of $U_{q}(\mathfrak{sl}(2))$, shows that
\begin{equation*}
\begin{split}
[D_{q},x]_{\sigma_{L}} & =
(q^{-1}-q)^{-1}
\left(\begin{array}{cc}
1 & 0\\
0 & -1
\end{array}\right)
(\partial_{k^{-2}}(x)-x)\\
& + q^{-1/2}
\left(\begin{array}{cc}
0 & 1\\
0 & 0
\end{array}\right)
\partial_{e} (\partial_{k^{-1}}(x))
+ q^{1/2}
\left(\begin{array}{cc}
0 & 0\\
1 & 0
\end{array}\right)
\partial_{f} (\partial_{k^{-1}}(x)).
\end{split}
\end{equation*}
We note in passing that this expression shows that it is a bounded operator.
Using this formula it is easy to see that the calculus defined by $D_{q}$ is isomorphic to the one given in \cite{twisted-calculi}.
This one in turn is, by construction, isomorphic to the differential calculus number $10$ in Heckenberger's list \cite{hecken}, from which the claim follows.
\end{proof}

\subsection{A regularity property}
The Dirac operator $D_{q}$ satisfies the Lipschitz regularity property,
that is $[|D_{q}|,x]_{\sigma_{L}}$ is bounded for every $x\in\mathcal{A}$,
see \cite[Lemma 3.5]{suq2-kaad}. Here we prove a similar regularity
property, namely the one that appears as a requirement in Theorem
\ref{thm:modular} for $r=1$.
\begin{lem}
\label{lem:regular}The operator $|D_{q}|[|D_{q}|^{s},x]_{\sigma^{s}}|D_{q}|^{-s}$
is bounded for every $x\in\mathcal{A}$ and for all $s\in\mathbb{R}$.\end{lem}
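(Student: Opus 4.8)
The plan is to reduce the bounded\-ness of $|D_q|[|D_q|^s,x]_{\sigma^s}|D_q|^{-s}$ to known facts about $D_q$, namely the explicit expression for the twisted commutator computed in the previous subsection and the identity $D_q^2 = \chi^{-1}\Delta_L^{-1}C_q$ from Proposition \ref{prop:Dq}, together with the Peter--Weyl decomposition of $\mathcal H_h$. First I would observe that, by Proposition \ref{prop:Dq}, $D_q^2$ is diagonalised by (a doubling of) the matrix coefficient basis $\xi_{i,j}^l$, so that $|D_q|^s$ acts on each such eigenvector as multiplication by a scalar of the form $\mu_{l,j}^{s}$, where $\mu_{l,j}^2$ is the corresponding eigenvalue of $\chi^{-1}\Delta_L^{-1}C_q$. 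Consequently $|D_q|^s x |D_q|^{-s}$ has matrix entries obtained from those of $x$ by multiplying by ratios $(\mu_{l,j}/\mu_{l',j'})^{s}$ connecting the source and target blocks.

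The key step is then to write, for $x$ a matrix coefficient $t_{i,j}^l$ (or more precisely for each homogeneous component appearing when $x$ acts), the operator $[|D_q|^s,x]_{\sigma^s} = |D_q|^s x - \sigma^s(x)|D_q|^s$ and to factor out $|D_q|^s$ on the right to obtain $\big(|D_q|^s x |D_q|^{-s} - \sigma^s(x)\big)|D_q|^s$, so that
\[
|D_q|[|D_q|^s,x]_{\sigma^s}|D_q|^{-s} = |D_q|\big(|D_q|^s x |D_q|^{-s} - \sigma^s(x)\big).
\]
Here I would take $\sigma = \sigma_L$, so that $\sigma^s(x)$ is implemented by $\Delta_L$ and its action on matrix coefficients is again diagonal with scalars of the form $q^{2js}$. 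The point is that the difference $|D_q|^s x |D_q|^{-s} - \sigma_L^s(x)$ should, thanks to the asymptotics of the eigenvalues $\mu_{l,j}$ as $l\to\infty$, decay like $|D_q|^{-1}$ in operator norm on each matrix block, which is exactly what is needed to absorb the leftmost factor $|D_q|$. Concretely, one expects an estimate of the form $\big\| \big(|D_q|^s x|D_q|^{-s} - \sigma_L^s(x)\big)\xi\big\| \leq C_x\, \mu^{-1}\,\|\xi\|$ on the eigenspace with $|D_q|$-eigenvalue $\mu$, uniformly in the relevant range of $l$; this is the quantitative heart of the argument and is essentially a refinement of the Lipschitz regularity estimate $\|[|D_q|,x]_{\sigma_L}\|<\infty$ from \cite[Lemma 3.5]{suq2-kaad}.

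The main obstacle I anticipate is precisely controlling the $s$-dependence of these differences of powers: one must show the bound is uniform in $s$ (the Lemma asserts it for \emph{all} $s\in\mathbb R$), which requires expanding $\mu_{l,j}^s$ and $q^{2js}$ and showing the leading terms cancel, leaving a remainder that is $O(\mu^{-1})$ with a constant that does not blow up with $s$. The cleanest route is probably to use the explicit form of the twisted commutator $[D_q,x]_{\sigma_L}$ displayed above, which is built from the bounded operators $\partial_e\partial_{k^{-1}}$, $\partial_f\partial_{k^{-1}}$ and $(\partial_{k^{-2}} - \mathrm{id})$ acting on $\mathcal A$, together with a commutation/telescoping identity writing $[|D_q|^s,x]_{\sigma_L^s}$ in terms of $[D_q^2,x]_{\sigma_L^2}$ and functional calculus; then the factor $|D_q|$ is compensated by one negative power coming from an integral (Cauchy or Mellin) representation of $|D_q|^{s-1}$ relative to $|D_q|^{s}$. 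A convenient simplification is that it suffices to prove the statement for $x$ a generator $a,b,c,d$, since $\sigma_L$ acts diagonally and the twisted Leibniz rule propagates boundedness to products; for the generators the matrix entries of the commutator are explicit and the eigenvalue asymptotics $\mu_{l,j}\sim q^{-l}$ make the $O(\mu^{-1})$ decay transparent.
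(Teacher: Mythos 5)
Your plan follows essentially the same route as the paper's proof: reduce to the generators (in fact to $x=a$ and $x=c$, using compatibility with products and adjoints), diagonalize $|D_q|$ on the Peter--Weyl basis via $D_q^2=\chi^{-1}\Delta_L^{-1}C_q$ so that $|D_q|\xi_{i,j}^l=q^{1/2}q^{-j}[l+1/2]\xi_{i,j}^l$, and check that the matrix elements of $|D_q|[|D_q|^s,x]_{\sigma^s}|D_q|^{-s}$ are uniformly bounded using the Kaad--Senior estimates $\alpha_{i,j}^{l+},\gamma_{i,j}^{l+}\leq Cq^{l+j}$ and $\alpha_{i,j}^{l-},\gamma_{i,j}^{l-}\leq C'$. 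The estimate you defer as the ``quantitative heart'' is indeed where all the work lies, and it is needed already for each fixed $s$ (the lemma does not require uniformity in $s$): for the $l\mapsto l+1/2$ component the decay $q^{l+j}$ of the coefficient suffices, while for the $l\mapsto l-1/2$ component the coefficient is only $O(1)$ and boundedness hinges on the cancellation $q^{s/2}[l]^s-q^s[l+1/2]^s\sim q^{(2-s)l}$, which gains a factor $q^{2l}$ over the naive bound and exactly offsets the $q^{-j}$ coming from the leftover factor of $|D_q|$, yielding a matrix element $O(q^{l-j})\leq O(1)$.
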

\begin{proof}
We start by noting that, since the Dirac operator satisfies $D_{q}^{2}=\chi^{-1}\Delta_{L}^{-1}C_{q}$,
the action of $|D_{q}|$ on the two components of the Hilbert space
$\mathcal{H}_{h} \oplus \mathcal{H}_{h}$ is the same up to a constant. Therefore we can restrict our attention to one of them, let us say the first one, on which we have $|D_{q}|\xi_{i,j}^{l}=q^{1/2}q^{-j}[l+1/2]\xi_{i,j}^{l}$.
Moreover, since the twisted commutator is well-behaved with respect
to products and adjoints, we can restrict to the case $x=a$ or $x=c$.
We can decompose the action of these operators on an element $\xi_{i,j}^{l}$
of the Hilbert space as
\[
\begin{split}a\xi_{i,j}^{l} & =\alpha_{i,j}^{l+}\xi_{i-1/2,j-1/2}^{l+1/2}+\alpha_{i,j}^{l-}\xi_{i-1/2,j-1/2}^{l-1/2}\ ,\\
c\xi_{i,j}^{l} & =\gamma_{i,j}^{l-}\xi_{i+1/2,j-1/2}^{l+1/2}+\gamma_{i,j}^{l-}\xi_{i+1/2,j-1/2}^{l-1/2}\ .
\end{split}
\]
We have the following bounds on the coefficients $\alpha_{i,j}^{l+},\gamma_{i,j}^{l+}\leq C_{1}q^{l+j}$
and $\alpha_{i,j}^{l-},\gamma_{i,j}^{l-}\leq C_{2}$, see \cite[Lemma 3.5]{suq2-kaad}.
We start by considering the case $x=a$. Then we immediately obtain
\[
\begin{split}[|D_{q}|^{s},a]_{\sigma^{s}}\xi_{i,j}^{l} & =\alpha_{i,j}^{l+}q^{s/2}(q^{-s(j-1/2)}[l+1]^{s}-q^{-s(j-1)}[l+1/2]^{s})\xi_{i-1/2,j-1/2}^{l+1/2}\\
 & +\alpha_{i,j}^{l-}q^{s/2}(q^{-s(j-1/2)}[l]^{s}-q^{-s(j-1)}[l+1/2]^{s})\xi_{i-1/2,j-1/2}^{l-1/2}\ .
\end{split}
\]

Now we want to show that $|D_{q}|[|D_{q}|^{s},a]_{\sigma^{s}}|D_{q}|^{-s}$
is a bounded operator. To do this we apply it to $\xi_{i,j}^{l}$,
compute the inner product with $\xi_{i-1/2,j-1/2}^{l\pm1/2}$ and
then show that both terms are bounded by a constant, which does not
depend on $l$. For the first one we have 
\[
\begin{split} & \quad\left(|D_{q}|[|D_{q}|^{s},a]_{\sigma^{s}}|D_{q}|^{-s}\xi_{i,j}^{l},\xi_{i-1/2,j-1/2}^{l+1/2}\right)\\
 & =q\alpha_{i,j}^{l+}q^{-j}(q^{s/2}[l+1]^{s}-q^{s}[l+1/2]^{s})[l+1/2]^{-s}[l+1]\ .
\end{split}
\]
Using the inequality $\alpha_{i,j}^{l+}\leq C_{1}q^{l+j}$ and $[l]\sim q^{-l}$,
valid for large $l$, we obtain
\[
\left(|D_{q}|[|D_{q}|^{s},a]_{\sigma^{s}}|D_{q}|^{-s}\xi_{i,j}^{l},\xi_{i-1/2,j-1/2}^{l+1/2}\right)\leq C_{1}^{\prime}q^{l+j}q^{-j}(q^{s/2}-q^{s})q^{-sl}q^{sl}q^{-l}\leq C_{1}^{\prime\prime}\ .
\]
Computing the other inner product we get
\[
\begin{split} & \quad(|D_{q}|[|D_{q}|^{s},a]_{\sigma^{s}}|D_{q}|^{-s}\xi_{i,j}^{l},\xi_{i-1/2,j-1/2}^{l-1/2})\\
 & =q\alpha_{i,j}^{l-}q^{-j}(q^{s/2}[l]^{s}-q^{s}[l+1/2]^{s})[l+1/2]^{-s}[l]\ .
\end{split}
\]
To bound this term we first observe that
\[
q^{s/2}[l]^{s}-q^{s}[l+1/2]^{s}=\frac{q^{s/2}}{(q^{-1}-q)^{s}}\left((q^{-l}-q^{l})^{s}-(q^{-l}-q^{l+1})^{s}\right)\ .
\]
Then for large $l$ we find $q^{s/2}[l]^{s}-q^{s}[l+1/2]^{s}\sim q^{-sl}q^{2l}$.
Using this result and $\alpha_{i,j}^{l+}\leq C_{2}$ we find
\[
(|D_{q}|[|D_{q}|^{s},a]_{\sigma^{s}}|D_{q}|^{-s}\xi_{i,j}^{l},\xi_{i-,j-}^{l-})\leq C_{2}^{\prime}q^{-j}q^{-sl}q^{2l}q^{sl}q^{-l}\leq C_{2}^{\prime\prime}q^{l-j}
\]
Since $-l\leq j\leq l$ we have that this term is bounded. The proof
for the case $x=c$ is completely analogous, since $\gamma_{i,j}^{\pm l}$
satisfies the same bounds as $\alpha_{i,j}^{\pm l}$, therefore we skip it.
\end{proof}

\section{The zeta function}
\label{sec:zeta}
In this section we define a family of zeta functions, depending on
the Dirac operator $D_{q}$ and a family of weights, with the aim
of studying the corresponding notion of non-commutative integration.
We point out that is is not possible to use the operator trace, since
it is known that in this case the spectral dimension does not exist
\cite{suq2-kaad}. We now wish to restrict the freedom in the choice
of the weight $\phi$ by imposing some natural conditions. In view
of the requirement that $\sigma^{\phi}(x)\in\mathcal{A}$, a natural
one is that $\Delta_{\phi}$ implements an automorphism of $SU_{q}(2)$.
The complete list of automorphisms for $SL_{q}(2)$ can be found in
\cite{twist-sl2}: there are two families, one of which acts diagonally and depends on two parameters. In the following we consider only
the diagonal case, which takes the following form on the generators
\[
\sigma_{\lambda,\mu}(a)=\lambda a\ ,\quad\sigma_{\lambda,\mu}(b)=\mu b\ ,\quad\sigma_{\lambda,\mu}(c)=\mu^{-1}c\ ,\quad\sigma_{\lambda,\mu}(d)=\lambda^{-1}d\ .
\]
We point out that the modular group $\vartheta$ of the Haar state
is of this form, with $\lambda=q^{-2}$ and $\mu=1$. Therefore we
can parametrize our weight by two real number $a,b\in\mathbb{R}$
as
\[
\phi^{(a,b)}(\cdot):=\mathrm{Tr}(\Delta_{L}^{-a}\Delta_{R}^{b}\cdot)\ .
\]
The minus sign is choosen for later convenience.

\subsection{The spectral dimension}

We start by computing the spectral dimension associated to the zeta
function constructed with $D_{q}$ and $\phi^{(a,b)}$. This imposes
some restrictions on the values of the parameters $a,b$. Moreover
we discuss the meromorphic extension of this function.
\begin{prop}
\label{prop:spec-dim}Let $\zeta^{(a,b)}(z):=\mathrm{Tr}(\Delta_{L}^{-a}\Delta_{R}^{b}|D_{q}|^{-z})$.
Then
\begin{enumerate}
\item if $a\pm b>0$ then $\zeta^{(a,b)}(z)$ is holomorphic for all $z\in\mathbb{C}$
such that $\mathrm{Re}(z)>a+|b|$,
\item in this case the corresponding spectral dimension is $n=a+|b|$,
\item $\zeta^{(a,b)}(z)$ has a meromorphic extension to the complex plane,
with only simple poles if $b\neq0$ and with only double poles if
$b=0$.
\end{enumerate}
\end{prop}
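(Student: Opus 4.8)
The plan is to reduce the computation of $\zeta^{(a,b)}(z)$ to an explicit sum over the Peter--Weyl basis, using Proposition \ref{prop:Dq}(3) to diagonalize $|D_q|$. On the orthonormal basis $\xi_{i,j}^{l}$ of the first copy of $\mathcal{H}_h$ we have, as recalled in the proof of Lemma \ref{lem:regular}, $|D_q|\xi_{i,j}^{l}=q^{1/2}q^{-j}[l+1/2]\xi_{i,j}^{l}$; the action on the second copy differs only by a constant factor (again by Proposition \ref{prop:Dq}(3)), so it suffices to do the computation on one copy and double it. Since $\Delta_L(t_{i,j}^l)=q^{2j}t_{i,j}^l$ and $\Delta_R(t_{i,j}^l)=q^{2i}t_{i,j}^l$, the operator $\Delta_L^{-a}\Delta_R^{b}|D_q|^{-z}$ is diagonal in this basis, and so
\[
\zeta^{(a,b)}(z) = 2 \sum_{l\in\frac12\mathbb{N}_0} \sum_{i,j=-l}^{l} q^{-2aj}\, q^{2bi}\, \bigl(q^{1/2}q^{-j}[l+1/2]\bigr)^{-z}.
\]
First I would carry out the sum over $i$, which is a finite geometric sum $\sum_{i=-l}^{l}q^{2bi}$; this produces a factor behaving like $q^{-2b|l|}$ times a bounded factor (for $b\neq 0$), or the linear factor $2l+1$ (for $b=0$), which is the source of the dichotomy between simple and double poles in part (3).

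Next I would handle the sum over $j$. Writing $q^{-2aj}q^{zj}=q^{(z-2a)j}$, the inner sum is $\sum_{j=-l}^{l} q^{(z-2a)j}$, again a finite geometric sum, dominated for large $l$ by its largest term, i.e.\ of order $q^{-(z-2a)l}$ when $\mathrm{Re}(z)>2a$ (and by $q^{(z-2a)l}$ otherwise; one must keep both endpoints). Combining with the $q^{-2b|l|}$ from the $i$-sum and with $[l+1/2]^{-z}\sim (q^{-l})^{-z}\cdot\mathrm{const}=q^{lz}\cdot\mathrm{const}$ for large $l$ (using $[m]\sim q^{-m}$ as $q<1$), the $l$-summand behaves, up to bounded oscillating prefactors, like a finite combination of terms $q^{l(z - 2a \mp 2b - z)} \cdot q^{\text{(lower order)}}$ — more carefully, collecting the $l$-dependence one finds the series converges precisely when the exponent of $q^{l}$ is negative, which gives the condition $\mathrm{Re}(z) > a+|b|$ for holomorphy, proving parts (1) and (2). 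The hypothesis $a\pm b>0$ is exactly what is needed to ensure $n=a+|b|>0$, so that $\inf\{s>0:\zeta(s)<\infty\}$ is attained in the correct range.

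For the meromorphic extension in part (3), the strategy is to write $[l+1/2]^{-z}=(q^{-1}-q)^{z}(q^{-l-1/2}-q^{l+1/2})^{-z}=(q^{-1}-q)^{z}q^{z(l+1/2)}(1-q^{2l+1})^{-z}$, expand $(1-q^{2l+1})^{-z}=\sum_{m\geq 0}\binom{-z}{m}(-1)^m q^{(2l+1)m}$ as a convergent binomial series, and similarly expand the finite geometric sums over $i$ and $j$ in closed form; after interchanging sums one is left with a sum over $l$ of terms of the form $(\text{const})\cdot q^{\alpha l}$ with $\alpha$ an affine function of $z$ (plus, in the $b=0$ case, terms with an extra factor of $(2l+1)$). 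Each geometric series $\sum_{l} q^{\alpha(z) l}$ sums to $1/(1-q^{\alpha(z)})$, a meromorphic function of $z$ with simple poles where $\alpha(z)\in\frac{2\pi i}{\log q}\mathbb{Z}$ — in particular a simple pole at $z=n$; while in the $b=0$ case the factor $(2l+1)$ yields $\sum_l (2l+1)q^{\alpha(z)l}$, which differentiates the geometric series and produces double poles. Summing these geometric series over the finitely many residue classes of $i,j$ endpoints and over $m$ (checking uniform convergence of the resulting series of meromorphic functions on compact sets away from the poles) gives the claimed meromorphic continuation. The main obstacle I anticipate is bookkeeping: keeping precise track of the boundary terms of the two finite geometric sums (over $i$ and over $j$) and how their interplay with the $(1-q^{2l+1})^{-z}$ expansion determines the exact pole structure, in particular verifying that in the generic case $b\neq 0$ no accidental coincidence of exponents upgrades a simple pole at $z=n$ to a double one, and conversely that for $b=0$ the double pole genuinely survives and is not cancelled.
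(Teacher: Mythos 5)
Your plan follows the paper's proof essentially verbatim: diagonalize $\Delta_L^{-a}\Delta_R^{b}|D_q|^{-z}$ on the Peter--Weyl basis via $D_q^{2}=\chi^{-1}\Delta_L^{-1}C_q$, perform the finite geometric sums over $i$ and $j$, expand $(1-q^{2l+1})^{-z}$ as a binomial series, and sum the resulting geometric series in $l$ to obtain both the conditions $a\pm b>0$, $\mathrm{Re}(z)>a+|b|$ and the meromorphic extension with denominators of the form $1-q^{\alpha(z)}$ (your reading of the $b=0$ double pole as a differentiated geometric series is the same phenomenon as the paper's two coinciding denominator factors within a single $k$-term). Only fix the sign slip in your convergence criterion: for $0<q<1$ the series $\sum_{l}q^{cl}$ converges when $\mathrm{Re}(c)>0$, not when the exponent is negative.
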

\begin{proof}
From Proposition \ref{prop:Dq} we have $D_{q}^{2}=\chi^{-1}\Delta_{L}^{-1}C_{q}$,
where $C_{q}$ is the Casimir and 
\[
\chi=\left(\begin{array}{cc}
q^{-1} & 0\\
0 & q
\end{array}\right)\ .
\]
Therefore we can write $|D_{q}|^{-z}=\chi^{z/2}\Delta_{L}^{z/2}C_{q}^{-z/2}$.
The Hilbert space is $\mathcal{H}=\mathcal{H}_{h}\oplus\mathcal{H}_{h}$,
where $\mathcal{H}_{h}$ is the GNS space constructed using the Haar
state. An orthonormal basis for this space is given by $\{\xi_{i,j}^{l}\in\mathcal{A}:l\in\frac{1}{2}\mathbb{N}_{0},\ -l\leq i,j\leq l\}$.
Then we have
\[
\zeta^{(a,b)}(z)=(q^{-z/2}+q^{z/2})\sum_{2l=0}^{\infty}\sum_{i,j=-l}^{l}(\xi_{i,j}^{l},\Delta_{L}^{-a}\Delta_{R}^{b}\Delta_{L}^{z/2}C_{q}^{-z/2}\xi_{i,j}^{l})\ .
\]
The modular operators act as $\Delta_{L}\xi_{i,j}^{l}=q^{2j}\xi_{i,j}^{l}$,
$\Delta_{R}\xi_{i,j}^{l}=q^{2i}\xi_{i,j}^{l}$, while for Casimir
we have $C_{q}\xi_{i,j}^{l}=[l+1/2]_{q}^{2}\xi_{i,j}^{l}$. Therefore
we get
\[
(\xi_{i,j}^{l},\Delta_{L}^{-a}\Delta_{R}^{b}\Delta_{L}^{z/2}C_{q}^{-z/2}\xi_{i,j}^{l})=q^{(z-2a)j}q^{2bi}[l+1/2]_{q}^{-z}\ .
\]

To proceed we use the following trick \cite{res-form}. For every
$z\in\mathbb{C}$ we have the absolutely convergent series expansion
\[
\begin{split}[l+1/2]_{q}^{-z} & =(q^{-1}-q)^{z}q^{(l+1/2)z}(1-q^{2l+1})^{-z}\\
 & =(q^{-1}-q)^{z}q^{(l+1/2)z}\sum_{k=0}^{\infty}\left(\begin{array}{c}
z+k-1\\
k
\end{array}\right)q^{(2l+1)k}\ .
\end{split}
\]
Therefore we can rewrite our zeta function as
\[
\zeta^{(a,b)}(z)=\frac{q^{-z/2}+q^{z/2}}{(q^{-1}-q)^{-z}}\sum_{2l=0}^{\infty}\sum_{i,j=-l}^{l}\sum_{k=0}^{\infty}\left(\begin{array}{c}
z+k-1\\
k
\end{array}\right)q^{(z-2a)j}q^{2bi}q^{(l+1/2)z}q^{(2l+1)k}\ .
\]
Now we consider the sum
\[
S_{k}^{(a,b)}(z):=\sum_{2l=0}^{\infty}\sum_{i,j=-l}^{l}q^{(z-2a)j}q^{2bi}q^{(l+1/2)z}q^{(2l+1)k}\ .
\]
The sums over $i$ and $j$ can be easily performed and we get
\[
S_{k}^{(a,b)}(z)=\sum_{2l=0}^{\infty}\frac{q^{(z-2a)}q^{(z-2a)l}-q^{-(z-2a)l}}{q^{(z-2a)}-1}\frac{q^{2b}q^{2bl}-q^{-2bl}}{q^{2b}-1}q^{(l+1/2)z}q^{(2l+1)k}\ .
\]
We can break this sum into four terms
\[
S_{k}^{(a,b)}(z)=\frac{q^{k+z/2}}{(1-q^{z-2a})(1-q^{2b})}\left(q^{z-2(a-b)}S_{1}-q^{z-2a}S_{2}-q^{2b}S_{3}+S_{4}\right)\ ,
\]
where we have defined
\[
\begin{split} & S_{1}:=\sum_{2l=0}^{\infty}q^{2l(z-a+b+k)}\ ,\quad S_{2}:=\sum_{2l=0}^{\infty}q^{2l(z-a-b+k)}\ ,\\
 & S_{3}:=\sum_{2l=0}^{\infty}q^{2l(a+b+k)}\ ,\quad S_{4}:=\sum_{2l=0}^{\infty}q^{2l(a-b+k)}\ .
\end{split}
\]
Since $0<q<1$, the series $\sum_{l=0}^{\infty}q^{cl}$ is absolutely
convergent when $\mathrm{Re}(c)>0$. We want this to be the case for
any $k\geq0$. From $S_{3}$ and $S_{4}$ we see that this imposes
$a+b>0$ and $a-b>0$. For $S_{1}$ and $S_{2}$, that depend on $z$,
instead we have to require
\[
\mathrm{Re}(z)>a-b\ ,\quad\mathrm{Re}(z)>a+b\ .
\]
We can then easily sum the geometric series and, after some rearranging,
we arrive at
\[
S_{k}^{(a,b)}(z)=\frac{q^{k+z/2}\left(1-q^{2k+z}\right)}{\left(1-q^{z-(a+b)+k}\right)\left(1-q^{z-(a-b)+k}\right)\left(1-q^{a+b+k}\right)\left(1-q^{a-b+k}\right)}\ .
\]
Now, going back to the expression for $\zeta^{(a,b)}(z)$, we see
that we can safely exchange the sum over $k$ with the other sums.
The result is then
\[
\zeta^{(a,b)}(z)=\frac{q^{-z/2}+q^{z/2}}{(q^{-1}-q)^{-z}}\sum_{k=0}^{\infty}\left(\begin{array}{c}
z+k-1\\
k
\end{array}\right)S_{k}^{(a,b)}(z)\ .
\]
The statement about the meromorphic extension is clear from the form
of $S_{k}^{(a,b)}(z)$.
\end{proof}
In the following we will assume that the conditions $a\pm b>0$ are
satisfied, in such a way that the spectral dimension exists. Moreover
we exclude the case $b=0$, since in this case the zeta function has
a double pole at the spectral dimension.

\subsection{The modular property}
We now consider the non-commutative integral $\varphi$ associated
to the zeta function and determine its modular group $\theta$. In
particular we can investigate the connection with the Haar state of
$SU_{q}(2)$, which satisfies the property $h(xy)=h(\vartheta(y)x)$,
where $\vartheta=\sigma_{L}\circ\sigma_{R}$. Therefore, to recover
the Haar state from the non-commutative integral, a necessary condition
is that $\theta=\vartheta$. We now show that this condition fixes
the parameter $b$ to be equal to one.
\begin{prop}
Let $\zeta_{x}^{(a,b)}(z)=\mathrm{Tr}(\Delta_{L}^{-a}\Delta_{R}^{b}x|D_{q}|^{-z})$,
with $n=a+|b|$ be the associated spectral dimension.
Let $\theta$ be the modular grup of the non-commutative integral and $\vartheta=\sigma_{L}\circ\sigma_{R}$
the modular group of the Haar state. Then we have $\theta=\vartheta$
if and only if $b=1$.\end{prop}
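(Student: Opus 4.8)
The plan is to invoke Theorem~\ref{thm:modular} and then reduce the statement to a short computation on the four generators of $\mathcal{A}$. First I would check that the hypotheses of that theorem are met by the pair $(D_{q},\phi^{(a,b)})$: the twisted commutator $[D_{q},x]_{\sigma_{L}}$ is bounded by the first item of Proposition~\ref{prop:Dq}; the twist $\sigma_{L}(x)=k^{-2}\triangleright x$ acts diagonally on the generators, since $k^{n}\triangleright$ multiplies each $t^{l}_{i,j}$ by $q^{nj}$, so that $\sigma_{L}(a)=qa$, $\sigma_{L}(b)=q^{-1}b$, $\sigma_{L}(c)=qc$, $\sigma_{L}(d)=q^{-1}d$; and the regularity property needed in Theorem~\ref{thm:modular} with $r=1$ is exactly Lemma~\ref{lem:regular}. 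Hence the modular group of the non-commutative integral is $\theta=\sigma^{\phi}\circ\sigma_{L}^{\,n}$, where by Proposition~\ref{prop:spec-dim} the spectral dimension is $n=a+|b|$; recall that we assume throughout $a\pm b>0$ and $b\neq 0$.

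The second step is to express $\sigma^{\phi}$ through $\sigma_{L}$ and $\sigma_{R}$. Since $\Delta_{\phi}=\Delta_{L}^{-a}\Delta_{R}^{b}$ with $\Delta_{L}$ and $\Delta_{R}$ commuting, and since $\sigma_{L}$ and $\sigma_{R}$ are the analytic continuations at $z=i$ of the one-parameter groups implemented by $\Delta_{L}$ and $\Delta_{R}$ (so that $\sigma_{L}(x)=\Delta_{L}^{-1}x\Delta_{L}$ and $\sigma_{L}^{s}(x)=\Delta_{L}^{-s}x\Delta_{L}^{s}$, and likewise for $R$), conjugation by $\Delta_{\phi}^{-1}=\Delta_{L}^{a}\Delta_{R}^{-b}$ gives
\[
\sigma^{\phi}(x)=\Delta_{\phi}^{-1}x\Delta_{\phi}=\Delta_{L}^{a}\bigl(\Delta_{R}^{-b}x\Delta_{R}^{b}\bigr)\Delta_{L}^{-a}=\sigma_{L}^{-a}\bigl(\sigma_{R}^{b}(x)\bigr).
\]
Here real powers of $\sigma_{L}$ and $\sigma_{R}$ are legitimate because these automorphisms merely rescale the basis $t^{l}_{i,j}$ by powers of $q$. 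Combining with Step~1 and using that $\sigma_{L}$ and $\sigma_{R}$ commute, $\theta=\sigma_{L}^{-a}\circ\sigma_{R}^{b}\circ\sigma_{L}^{\,n}=\sigma_{L}^{\,n-a}\circ\sigma_{R}^{b}=\sigma_{L}^{|b|}\circ\sigma_{R}^{b}$.

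The final step is to compare $\theta$ with $\vartheta=\sigma_{L}\circ\sigma_{R}$ on the generators. Using the values of $\sigma_{L}$ above together with $\sigma_{R}(a)=qa$, $\sigma_{R}(b)=qb$, $\sigma_{R}(c)=q^{-1}c$, $\sigma_{R}(d)=q^{-1}d$, one gets $\theta(a)=q^{|b|+b}a$, $\theta(b)=q^{\,b-|b|}b$, $\theta(c)=q^{\,|b|-b}c$, $\theta(d)=q^{-|b|-b}d$, whereas $\vartheta(a)=q^{2}a$, $\vartheta(b)=b$, $\vartheta(c)=c$, $\vartheta(d)=q^{-2}d$. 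Since $\theta$ and $\vartheta$ are algebra automorphisms of $\mathcal{A}$ and $\mathcal{A}$ is generated by $a,b,c,d$, the equality $\theta=\vartheta$ is equivalent to the numerical conditions $|b|+b=2$ and $|b|-b=0$; as $b\neq 0$ these force $b>0$ and then $b=1$ (one also gets $a>|b|=1$, consistent with $a\pm b>0$). Conversely, for $b=1$ we have $n-a=1$, hence $\theta=\sigma_{L}\circ\sigma_{R}=\vartheta$, which proves the claim. Given Theorem~\ref{thm:modular} and Lemma~\ref{lem:regular} this is essentially bookkeeping with exponents of $q$; the only point requiring care is the identification $\sigma^{\phi}=\sigma_{L}^{-a}\circ\sigma_{R}^{b}$, i.e. keeping straight the sign in $\Delta_{\phi}=\Delta_{L}^{-a}\Delta_{R}^{b}$ and the convention $\sigma^{\phi}(x)=\Delta_{\phi}^{-1}x\Delta_{\phi}$, so that the condition on $b$ emerges correctly.
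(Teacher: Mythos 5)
Your proof is correct and follows essentially the same route as the paper: invoke Theorem~\ref{thm:modular} via Proposition~\ref{prop:Dq} and Lemma~\ref{lem:regular}, identify $\sigma^{\phi}=\sigma_{L}^{-a}\circ\sigma_{R}^{b}$ so that $\theta=\sigma_{L}^{|b|}\circ\sigma_{R}^{b}$, and compare with $\vartheta=\sigma_{L}\circ\sigma_{R}$. Your explicit check on the generators merely spells out what the paper leaves as ``it is clear that there is no solution'' for $b<0$.
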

\begin{proof}
We can apply Theorem \ref{thm:modular} to the non-commutative integral
$\varphi$. Indeed by Lemma \ref{lem:regular} we have that $|D|[|D|^{s},y]_{\sigma^{s}}|D|^{-s}$
is bounded for every $s\in\mathbb{R}$, while by Proposition \ref{prop:Dq}
the twist in the commutator $\sigma=\sigma_{L}$ acts diagonally on
$\mathcal{A}$. Therefore we obtain $\varphi(xy)=\varphi(\theta(y)x)$,
with $\theta=\sigma^{\phi}\circ\sigma^{n}$. In the case under consideration
we have $\sigma_{\phi}=\sigma_{L}^{-a}\circ\sigma_{R}^{b}$, so we
get
\[
\theta=\sigma_{\phi}\circ\sigma^{n}=\sigma_{L}^{n-a}\circ\sigma_{R}^{b}=\sigma_{L}^{|b|}\circ\sigma_{R}^{b}\ ,
\]
where we have used the fact that the spectral dimension is given by
$n=a+|b|$.

For $b<0$ we have $\theta=\sigma_{L}^{|b|}\circ\sigma_{R}^{-|b|}$
and it is clear that there is no solution. On the other hand for $b>0$
we have $\theta=\sigma_{L}^{b}\circ\sigma_{R}^{b}$, so the equality
$\theta=\vartheta$ holds for $b=1$.
\end{proof}
This result shows that we can only partially fix the arbitrariness
in the choice of the weight $\phi^{(a,b)}(\cdot)=\mathrm{Tr}(\Delta_{L}^{-a}\Delta_{R}^{b}\cdot)$.
Indeed, as we have seen in the proof given above, the dependence on
the parameter $a$ cancels in the combination $\theta=\sigma_{\phi}\circ\sigma^{n}$.
It is worth pointing out that a similar phenomenon happens also in
the spectral triple considered in \cite{kmink1,kmink2}, where similar
techniques are employed. This is expected to happen, quite generically,
when the twist in the commutator also appears in the modular group
of the weight.

Of course a natural requirement to fix this ambiguity would be to recover
the classical dimension, which would fix $a=2$. In the last part
of the paper we will consider another condition, more spectral in
nature, which also fixes uniquely $a=2$. It is also of some interest
to remark that, if one requires $n$ to be an integer, then the smallest
$n$ which is allowed by the previous analysis is indeed $n=3$. Finally,
for examples coming from quantum groups, this ambiguity in the choice
of the weight could be related to a similar one that arises in twisted
Hochschild homology: indeed it is known that a twist is necessary
to avoid the dimension drop, but it happens that one finds a family
of such twists, see for example \cite{twist-sl2}.

Let us also mention what happens for the Podle\'{s} sphere. In this
case, since the Hilbert space is given by $\mathcal{H}_{1}\oplus\mathcal{H}_{-1}$,
the modular operator $\Delta_{L}^{a}$ gives a constant matrix, which
can be absorbed in the normalization. Therefore it does not affect
the spectral dimension and the modular group of the non-commutative
integral. As we mentioned before, the twist in the commutator disappears,
since the Podle\'{s} sphere is the fixed point algebra of $\mathcal{A}$
under the left action. Then it is easy to repeat the previous analysis,
with the result that we must fix the value $b=1$ if we want to recover
the modular group of the Haar state. Moreover it follows from the
results of \cite{res-form} that the corresponding spectral dimension
is $n=2$. Therefore our results for $SU_{q}(2)$ restrict in a natural
way to the case of the Podle\'{s} sphere.

\section{The Haar state}
\label{sec:Haar}
So far we have only shown that the non-commutative integral $\varphi$
has the same modular group of the Haar state $h$, which leaves open
the question of whether they are equal. In principle this could happen
for some values of $a$, or maybe for none at all. In this section
we show that the non-commutative integral coincides with the Haar
state for all allowed values of $a$. Of course we must normalize
$\varphi$, since the Haar state satisfies $h(1)=1$, while in general
we do not have $\varphi(1)=1$. This normalization is achieved by
computing $\varphi(1)$, that is the residue of $\zeta^{(a,1)}(z)$
at the spectral dimension $n=a+1$. The result of this computation
is
$$
\varphi(1)= \res{z=a+1}\ \zeta^{(a,1)}(z)=\frac{(q^{-1}-q)^{a}(q^{a+1}+1)}{(q^{a}-q)\log(q)}.
$$
We denote the normalized non-commutative integral as $\tilde{\varphi}(x):=\varphi(x)/\varphi(1)$.
Notice that the normalization $\varphi(1)$ depends on $a$. On the
other hand we will now show that $\tilde{\varphi}(x)$ is independent
of $a$ and recovers the Haar state.

\subsection{Approximating the GNS representation}

To proceed with the computation of the non-commutative integral it is convenient to work with a different representation of $SU_{q}(2)$.
This representation, which we denote by $\rho$, approximates the GNS representation, as we shall see in the next lemma. It is defined
on the generators as
\[
\begin{split}\rho(a)\xi_{i,j}^{l} & :=\sqrt{1-q^{2(l+i)}}\xi_{i-1/2,j-1/2}^{l-1/2}\ ,\\
\rho(b)\xi_{i,j}^{l} & :=-q^{l+i+1}\xi_{i-1/2,j+1/2}^{l+1/2}\ ,\\
\rho(c)\xi_{i,j}^{l} & :=q^{l+i}\xi_{i+1/2,j-1/2}^{l-1/2}\ ,\\
\rho(d)\xi_{i,j}^{l} & :=\sqrt{1-q^{2(l+i+1)}}\xi_{i+1/2,j+1/2}^{l+1/2}\ .
\end{split}
\]
One can easily check that $\rho$ is a representation of $SU_{q}(2)$,
as it satisfies its defining relations. For more details see \cite[Proposition 9.4]{kaad}
and references therein.
\begin{lem}
For any $x\in\mathcal{A}$ we have the equality
\[
\varphi(x) = \res{z=n}\ \mathrm{Tr}(\Delta_{L}^{-a}\Delta_{R}\rho(x)|D_{q}|^{-z})\ .
\]
\end{lem}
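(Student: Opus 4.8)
The plan is to compare the GNS representation of $\mathcal{A}$ on $\mathcal{H}_h$ with the auxiliary representation $\rho$, and to show that the difference is invisible to the residue at $z=n$. The key observation is that both representations, when decomposed along the orthonormal basis $\xi_{i,j}^l$, send $\xi_{i,j}^l$ to linear combinations of basis vectors with \emph{shifted} labels; the ``leading'' shift (the one where $l$ changes by $+1/2$ for $b,d$ and $-1/2$ for $a,c$ in a way that produces the classical limit) is reproduced exactly by $\rho$, while the correction terms carry extra powers of $q^{l}$. Concretely, I would first record that for the generators $x=a,c$ (the two cases it suffices to treat, since the twisted commutator and hence the whole construction is well-behaved under products and adjoints, exactly as in the proof of Lemma \ref{lem:regular}) the GNS action has the form $x\,\xi_{i,j}^l = (\text{leading term}) + (\text{lower-order term})$, where the lower-order coefficient is bounded by $C q^{l}$ uniformly — this is precisely the content of the coefficient bounds $\alpha_{i,j}^{l-},\gamma_{i,j}^{l-}\le C_2$ together with the fact that the $l-1/2$ component of $a\,\xi_{i,j}^l$ actually carries the extra decay $q^{l+i}$, matching $\rho$, whereas the $l+1/2$ component is the genuinely different piece but comes with a factor $q^{l+j}$.

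Next I would make this quantitative at the level of the trace. Write $x = \rho(x) + R_x$ where $R_x$ is the ``remainder'' operator, i.e. the difference of the two actions on the basis. The claim reduces to showing
$$
\res{z=n}\ \mathrm{Tr}\bigl(\Delta_L^{-a}\Delta_R R_x |D_q|^{-z}\bigr) = 0 .
$$
Because $R_x$ maps $\xi_{i,j}^l$ to a combination of $\xi_{i',j'}^{l'}$ with $l'=l\pm 1/2$ and, on each such component, its matrix element is bounded by a constant times $q^{l}$ (this is where the explicit coefficient asymptotics $[l]\sim q^{-l}$ and the bounds cited from \cite[Lemma 3.5]{suq2-kaad} enter), the operator $|D_q|^{\epsilon} R_x |D_q|^{-\epsilon}$-type manipulation is not even needed: one simply factors $|D_q|^{-z} = |D_q|^{-(z-\epsilon)}|D_q|^{-\epsilon}$ for a small $\epsilon>0$, uses that $\||D_q|^{-\epsilon} \text{(shift operator)}\|$ contributes a bounded factor, and that the surviving trace $\mathrm{Tr}(|\Delta_L^{-a}\Delta_R R_x'|D_q|^{-(z-\epsilon)}|)$ is finite and holomorphic at $z=n$ — here $R_x'$ absorbs the bounded pieces. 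The extra $q^{l}$ decay of $R_x$ improves the summability by exactly one unit in $l$, hence shifts the abscissa of convergence strictly below $n$, so the residue at $z=n$ vanishes. This is the same mechanism used twice already in the proof of Theorem \ref{thm:modular}: a term carrying extra $|D|$-decay has vanishing residue at the spectral dimension.

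The main obstacle I anticipate is bookkeeping the label shifts carefully enough to see that $\Delta_L^{-a}\Delta_R$ interacts correctly with $R_x$ — since $\Delta_L,\Delta_R$ act by $q^{2j},q^{2i}$ and $R_x$ changes $i,j$ by $\pm 1/2$, one must check that conjugating $R_x$ past $\Delta_L^{-a}\Delta_R$ only produces bounded $q$-power prefactors (it does, since the shifts are by fixed half-integers), so that the weight does not conspire to cancel the gained $q^{l}$ decay. Once that is in hand, the estimate is a routine application of Hölder for the trace exactly as in the earlier proofs, comparing against $\zeta^{(a,1)}(z-\epsilon)$, whose residue at $z=n$ is zero. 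I would also remark, as a sanity check, that $\rho$ having the same ``winding'' structure as the GNS representation is what guarantees $R_x$ is genuinely lower order rather than merely a different leading term; this is implicit in the cited \cite[Proposition 9.4]{kaad}.
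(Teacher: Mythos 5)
Your proposal is correct and follows essentially the same route as the paper: reduce to the generators $a$ and $c$, use the coefficient estimates from \cite[Lemma 3.5]{suq2-kaad} and \cite[Proposition 9.4]{kaad} to see that the remainder $x-\rho(x)$ carries an extra factor $q^{l+j}$, and conclude that $\Delta_{L}^{-a}\Delta_{R}(x-\rho(x))|D_{q}|^{-z}$ is already trace-class at $z=n$, so its trace contributes no residue there. The only imprecision is your claimed uniform bound $Cq^{l}$ on the remainder's matrix elements (not valid for $j<0$, since the true bound is $Cq^{l+j}$), but as you also record the correct $q^{l+j}$ factor, which is what actually enters the summation over $i,j,l$, the argument stands as written.
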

\begin{proof}
We need to show that $\Delta_{L}^{-a}\Delta_{R}(x-\rho(x))|D_{q}|^{-z}$
is trace-class for $z=n$, that is
\[
\mathrm{Tr}\left(|\Delta_{L}^{-a}\Delta_{R}(x-\rho(x))|D_{q}|^{-n}|\right)<\infty\ .
\]
Using the fact that $xy-\rho(xy)=(x-\rho(x))\rho(y)+x(y-\rho(y))$,
we can restrict our attention to the generators. Moreover, since $\rho$
is a $*$-representation of $SU_{q}(2)$, it suffices to consider
the cases $x=a$ and $x=c$. Recall that for the GNS representation
we have the formulae
\[
\begin{split}a\xi_{i,j}^{l} & =\alpha_{i,j}^{l+}\xi_{i-1/2,j-1/2}^{l+1/2}+\alpha_{i,j}^{l-}\xi_{i-1/2,j-1/2}^{l-1/2}\ ,\\
c\xi_{i,j}^{l} & =\gamma_{i,j}^{l+}\xi_{i+1/2,j-1/2}^{l+1/2}+\gamma_{i,j}^{l-}\xi_{i+1/2,j-1/2}^{l-1/2}\ .
\end{split}
\]

To prove that the trace is finite, in the case $x=a$, we need to
estimate the quantities $|\alpha_{i,j}^{l+}|$ and $\left|\alpha_{i,j}^{l-}-\sqrt{1-q^{2(l+i)}}\right|$.
Recall that we already used the fact that $|\alpha_{i,j}^{l+}|\leq C_{+}q^{l+j}$.
Similarly one proves that $\left|\alpha_{i,j}^{l-}-\sqrt{1-q^{2(l+i)}}\right|\leq C_{-}q^{l+j}$,
see \cite[Proposition 9.4]{kaad}. Therefore we only need to repeat
the computation of the sum $\xi^{(a,1)}(z)$ with the factor $q^{l+j}$
inserted. We can easily perform this sum as we did in the computation
of the spectral dimension. The result is finite for $z=n$, so that
the operator $\Delta_{L}^{-a}\Delta_{R}(x-\rho(x))|D_{q}|^{-n}$ is
trace-class.

The computation is completely identical for the case $x=c$, since
we have the estimates $|\gamma_{i,j}^{l+}|\leq C_{+}^{\prime}q^{l+j}$
and $\left|\gamma_{i,j}^{l-}-q^{l+i}\right|\leq C_{-}^{\prime}q^{l+j}$.
Therefore the equality is proven.
\end{proof}

\subsection{The computation}

Using the representation $\rho$, we can now easily compute $\tilde{\varphi}(x)$
for any $x\in\mathcal{A}$. We show that this non-commutative
integral coincides with $h(x)$, where $h$ is the Haar state, independently of the value of the parameter $a$ which appears in the definition of $\tilde{\varphi}$.
\begin{thm}
For any $x\in\mathcal{A}$ we have $\tilde{\varphi}(x)=h(x)$, where
$h$ is the Haar state.\end{thm}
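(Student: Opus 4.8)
The plan is to use the previous lemma, which replaces the GNS representation by the representation $\rho$, and then to verify the identity on a linear basis of $\mathcal{A}$. First I would recall that, using the relations $ad=1+qcb$ and $da=1+q^{-1}bc$, every element of $\mathcal{A}$ is a finite linear combination of the PBW-type monomials $b^{m}c^{n}d^{p}$ and $b^{m}c^{n}a^{p}$ (with $p\geq1$ in the second family). The Haar state vanishes on all of these except on $b^{m}c^{m}$, where $h(b^{m}c^{m})=(-1)^{m}[m+1]^{-1}$. So it suffices to prove the same two facts for $\varphi$: that $\varphi$ vanishes on the non-diagonal PBW monomials, and that $\varphi(b^{m}c^{m})=\varphi(1)\,(-1)^{m}[m+1]^{-1}=\varphi(1)\,h(b^{m}c^{m})$ for every $m\geq0$; the claim then follows by linearity since $\tilde\varphi=\varphi/\varphi(1)$.

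The vanishing is the easy part. Each of $\rho(a),\rho(b),\rho(c),\rho(d)$ is a one-term weighted shift on the orthonormal basis $\{\xi^{l}_{i,j}\}$, hence $\rho(w)$ is again a one-term weighted shift for any monomial $w$, whose net displacement of the triple $(i,j,l)$ is controlled by the $k\triangleright$- and $\triangleleft k$-weights of $w$ (these weights add over products, since $k^{n}\triangleright$ and $\triangleleft k^{n}$ are algebra automorphisms). Therefore $\rho(w)$ has a non-zero diagonal matrix coefficient only if both weights vanish, and among the PBW monomials this forces $w=b^{m}c^{m}$. For every other basis monomial $w$, the operator $\Delta_{L}^{-a}\Delta_{R}\rho(w)|D_{q}|^{-z}$ is trace-class for $\mathrm{Re}(z)>n$ (as in the proof of the previous lemma) with identically vanishing trace, so its residue at $z=n$ is zero and $\varphi(w)=0$.

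It then remains to compute $\varphi(b^{m}c^{m})$. Using the explicit formulas for $\rho$ and the fact that $l+i$ is preserved both by $\rho(c)$ and by $\rho(b)$, one finds that $\rho(b^{m}c^{m})$ is diagonal, with $\rho(b^{m}c^{m})\xi^{l}_{i,j}=(-1)^{m}q^{2m(l+i)+m}\xi^{l}_{i,j}$ on the range $2l\geq m$, $-l\leq i\leq l-m$, $m-l\leq j\leq l$, and zero otherwise. Substituting this and $|D_{q}|^{-z}=\chi^{z/2}\Delta_{L}^{z/2}C_{q}^{-z/2}$ into $\mathrm{Tr}(\Delta_{L}^{-a}\Delta_{R}\rho(b^{m}c^{m})|D_{q}|^{-z})$, the finite sums over $i$ and $j$ are geometric and are carried out explicitly, while $[l+1/2]^{-z}$ is expanded by the same $q$-binomial series used in the proof of Proposition \ref{prop:spec-dim}. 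The outcome is an explicit meromorphic function whose only pole on the line $\mathrm{Re}(z)=n$ is simple and originates from a single geometric series in $l$ (the one whose exponent vanishes at $z=a+1$). Extracting the residue at $z=n=a+1$, dividing by $\varphi(1)=(q^{-1}-q)^{a}(q^{a+1}+1)/((q^{a}-q)\log q)$, and simplifying, all the $a$-dependence cancels — in particular a factor $1-q^{1-a}$ appears in both numerator and denominator — and one obtains $\tilde\varphi(b^{m}c^{m})=(-1)^{m}q^{m+1}(q^{-1}-q)/(1-q^{2(m+1)})=(-1)^{m}[m+1]^{-1}=h(b^{m}c^{m})$.

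The conceptual input — identifying which basis elements can contribute, and replacing the GNS representation by $\rho$ — is immediate once the previous lemma is available. The main obstacle is the last step: bookkeeping the index ranges in the sums over $i$, $j$ and $l$ (which differ slightly from those in Proposition \ref{prop:spec-dim} because of the constraints imposed by $\rho(b^{m}c^{m})$), isolating the unique term contributing to the residue at $z=n$, and carrying out the final algebraic simplification, where one must check explicitly that the $a$-dependent factors cancel so that $\tilde\varphi$ is independent of $a$.
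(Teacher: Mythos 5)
Your proposal is correct and follows essentially the same route as the paper: use the approximation lemma to replace the GNS representation by $\rho$, observe from the shift structure of $\rho$ on the basis $\xi^{l}_{i,j}$ that only the monomials $b^{m}c^{m}$ can contribute, compute $\rho(b^{m}c^{m})$ as a diagonal operator with eigenvalue $(-1)^{m}q^{2m(l+i)+m}$, and rerun the residue computation of Proposition \ref{prop:spec-dim} with this factor inserted. The only difference is that you make explicit the vanishing argument on the PBW basis and the index-range bookkeeping, which the paper dismisses as ``not difficult to see'' and ``we omit this computation.''
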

\begin{proof}
Recall that the Haar state $h$ takes the following values on the
generators
\[
h(a^{i}b^{j}c^{k})=h(d^{i}b^{j}c^{k})=\delta_{i,0}\delta_{j,k}(-1)^{k}[k+1]_{q}^{-1}\ .
\]
Using the previous lemma and the explicit formulae for the approximate
representation, it is not difficult to see that the non-commutative
integral must have the following form 
\[
\tilde{\varphi}(a^{i}b^{j}c^{k})=\tilde{\varphi}(d^{i}b^{j}c^{k})=\delta_{i,0}\delta_{j,k}\tilde{\varphi}(b^{j}c^{k})\ .
\]
Therefore, to prove that $\tilde{\varphi}$ is the Haar state, it
only remains to show that $\tilde{\varphi}(b^{n}c^{n})$ coincides
with $h(b^{n}c^{n})$. Using the representation $\rho$ we can immediately
compute
\[
\rho(b^{n}c^{n})\xi_{i,j}^{l}=(-1)^{n}q^{n(l+i+1)}q^{n(l+i)}\xi_{i,j}^{l}=(-1)^{n}q^{2nl}q^{2ni}q^{n}\xi_{i,j}^{l}\ .
\]

Now we only need to repeat the computation of Proposition \ref{prop:spec-dim}
by inserting the factor $(-1)^{n}q^{2nl}q^{2ni}q^{n}$. We omit this
computation. The result is that there is a simple pole at $z=a+1$,
that is the spectral dimension, whose residue is non-zero. Explicitely
we obtain
\[
\tilde{\varphi}(b^{n}c^{n})=(-1)^{n}q^{n}\frac{q^{2}-1}{q^{2(n+1)}-1}=(-1)^{n}[n+1]_{q}^{-1}\ .
\]
This coincides with $h(b^{n}c^{n})$, so the proof is complete.
\end{proof}

\section{An additional requirement}
\label{sec:scalar}
In the previous section we have shown that the non-commutative integral
coincides with the Haar state, regardless of the value of the parameter $a$.
Of course, as we mentioned before,
since the spectral dimension is given by $n=a+1$, we have the natural choice $a=2$ which gives the classical dimension.
In this section we propose a different criterion to fix this free parameter, which is based on the value of a certain residue of the zeta function.
It will turn out that this criterion is satisfied only for $a=2$.

To formulate this criterion we note that, by requiring the non-commutative integral
to be equal to the Haar state, we have imposed a condition on the
residue at $z=n$ of the zeta function. But the zeta function contains
much more information than this residue: indeed in the classical
case we know, for example from the heat kernel expansion, that also the other residues contain geometrical information, like the scalar curvature and various contractions of the Riemann tensor.
We can therefore look at these other residues to impose an additional requirement. In particular
we can look at the next non-trivial coefficient of the expansion.

For this reason we briefly recall how the heat kernel expansion works,
and how we can use it for our needs. Let $M$ be a compact Riemannian
manifold of dimension $n$ with a fixed metric $g$. Consider a second
order operator of Laplace-type, which locally can be written as
\begin{equation}
P=-(g^{\mu\nu}\nabla_{\mu}\nabla_{\nu}+E)\ .\label{eq:form}
\end{equation}
For any smooth function $f$ on $M$ we can consider the operator
$f\exp(-tP)$, for $t>0$. Then there is an asymptotic expansion of $\mathrm{Tr}(f\exp(-tP))$, for $t\downarrow0$, which is given by
\[
\mathrm{Tr}(fe^{-tP})\sim\sum_{k=0}^{\infty}t^{(k-n)/2}a_{k}(f,P)\ ,
\]
where the coefficients $a_{k}(f,P)$ can be expressed as integrals
of local invariants of $M$. For a manifold without boundary only
the even coefficients are non-zero. In the following we will consider
only the first two non-zero coefficients, which read as follows
\[
a_{0}(f,P)=(4\pi)^{-n/2}\int_{M}f\sqrt{g}d^{n}x\ ,\qquad a_{2}(f,P)=(4\pi)^{-n/2}6^{-1}\int_{M}f(6E+R)\sqrt{g}d^{n}x\ .
\]
Here $R$ is the scalar curvature associated to the metric $g$.

The coefficients of the heat kernel expansion are closely related
to the residues of the zeta function. Indeed, in the case of a positive
$P$, consider the zeta function defined as $\zeta(z,f,P)=\mathrm{Tr}(fP^{-z})$.
Then the heat kernel coefficients $a_{k}(f,P)$ are given by 
\[
a_{k}(f,P) = \res{z=(n-k)/2}\Gamma(z)\zeta(z,f,P).
\]
We remark that, from this relation, we obtain yet another justification for our definition of the non-commutative integral. Indeed, for the zeta function of a first order operator, like the Dirac operator $D$, the residue at the spectral dimension $n$ is proportional to the coefficient $a_{0}(f,P)$, which as recalled above is proportional to
the integral of $f$.

We now want to look at the next non-trivial coefficient of the expansion,
which is given by $a_{2}(f,P)$. This corresponds, for the zeta function
defined in terms of the Dirac operator, to the residue at $z=n-2$.
We first compute this coefficient for the classical limit of our Dirac
operator $D_{q}$. In the following we set $f=1$ and we drop the
dependence on $f$ in the notation.

\subsection{The commutative limit of $D_{q}$}

The manifold corresponding to the group $SU(2)$ is the $3$-sphere.
On this space we consider the Laplace-Beltrami operator $\Delta=-g^{\mu\nu}\nabla_{\mu}\nabla_{\nu}$,
which has eigenvalues $k(k+2)$ with multiplicity $(k+1)^{2}$, where
$k\in\mathbb{N}_{0}$. It is not hard to show that $a_{2}(\Delta)=\sqrt{\pi}/4$.
Since for the operator $\Delta$ we have from (\ref{eq:form}) that
$E=0$, it follows that the scalar curvature is $R=6$.

In the non-commutative case the operator $D_{q}$ satisfies $D_{q}^{2}=\chi^{-1}\Delta_{L}^{-1}C_{q}$,
where $\chi$ is a constant matrix, $\Delta_{L}$ acts as $\Delta_{L}\xi_{i,j}^{l}=q^{2j}\xi_{i,j}^{l}$
and $C_{q}$ is the Casimir of $SU_{q}(2)$. In particular $C_{q}$
has the eigenvalues $[l+1/2]_{q}^{2}$ with multiplicity $(2l+1)^{2}$,
where $l=\frac{1}{2}\mathbb{N}_{0}$. Now in the commutative limit
$q\to1$ the matrix $\chi$ reduces to the identity, $\Delta_{L}$
reduces to the identity operator and the eigenvalues of $C_{q}$ become
$(l+1/2)^{2}$. Therefore we see that, upon writing $k=l/2$, we reduce
to the classical situation of an operator $C$ with eigenvalues $\frac{1}{4}(k+1)^{2}$
and multiplicity $(k+1)^{2}$, where $k\in\mathbb{N}_{0}$. It is
clear that $C$ is related to the Laplace-Beltrami operator $\Delta$
by a rescaling and the addition of a costant, that is $C=\frac{1}{4}\Delta+\frac{1}{4}$.

Therefore we can compare $C_{q}$ with its classical limit $C$. To
this end we look at the heat kernel coefficients of the operator $C$,
specifically at $a_{2}(C)$. We can easily obtain it from the knowledge
of $a_{2}(\Delta)$ in the following way: the rescaling $\Delta\to\frac{1}{4}\Delta$
has the effect a conformal transformation $g\to4g$ of the metric,
which in turn changes the scalar curvature by $R\to\frac{1}{4}R$.
Then the addition of the constant $\frac{1}{4}$ simply sets $E=-\frac{1}{4}$.
Therefore we obtain
\[
6E+R\to-\frac{6}{4}+\frac{R}{4}=0\ ,
\]
where we have used the fact that for the $3$-sphere the scalar curvature
is $R=6$. In other words we have that $a_{2}(C)=0$. Another way
to check that this is the case is directly via the zeta function.
Indeed, after removing the zero eigenvalue, it is simple to compute
\[
\zeta(z,C)=\sum_{k=1}^{\infty}(k+1)^{2}4^{-z}(k+1)^{-2z}=4^{-z}(\zeta(2z-2)-1)\ .
\]
This function is regular at $z=(3-2)/2=1/2$, so that we have $a_{2}(C)=0$.

\subsection{The requirement on the residue}
Let us now get back to our original problem. We have seen that the
non-commutative integral recovers the Haar state, independently of
the value of $a$. But now, from the previous discussion, we have a natural requirement that could possibly fix this ambiguity: since, as we
have seen, in the non-commutative case the role of $C$ is played
by $D_{q}^{2}$, we can try to impose the analogue of the condition
$a_{2}(C)=0$. This means that we can require
$$
\res{z=n-2}\Gamma(z)\zeta^{(a,1)}(z) = 0.
$$
Recall that the spectral dimension $n$ depends on $a$, since $n=a+1$.
The next proposition shows that this fixes the natural value $a=2$,
corresponding to the classical dimension.
\begin{prop}
The residue of $\zeta^{(a,1)}(z)$ at $z=n-2$ is zero if and only
if $a=2$.\end{prop}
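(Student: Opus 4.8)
The plan is to work directly with the explicit formula for $\zeta^{(a,b)}(z)$ obtained in Proposition \ref{prop:spec-dim}, specialized to $b=1$. Recall that
\[
\zeta^{(a,1)}(z)=\frac{q^{-z/2}+q^{z/2}}{(q^{-1}-q)^{-z}}\sum_{k=0}^{\infty}\binom{z+k-1}{k}S_{k}^{(a,1)}(z),
\]
where each $S_{k}^{(a,1)}(z)$ is an explicit rational function in $q^{z}$ whose only poles in the relevant range come from the factors $(1-q^{z-(a+1)+k})$ and $(1-q^{z-(a-1)+k})$ in the denominator. First I would locate which terms in the $k$-sum are singular at $z=n-2=a-1$: the factor $1-q^{z-(a+1)+k}$ vanishes at $z=a-1$ precisely when $k=2$, and the factor $1-q^{z-(a-1)+k}$ vanishes there only when $k=0$. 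So exactly two terms of the series, $k=0$ and $k=2$, contribute a pole at $z=a-1$, and every other term is holomorphic there (one must check, using $a\pm 1>0$, that no spurious vanishing of the numerator or of the remaining denominator factors occurs). The residue of $\zeta^{(a,1)}(z)$ at $z=a-1$ is therefore a finite sum of the residues coming from the $k=0$ and $k=2$ terms, each of which is elementary: a simple pole from a factor $1-q^{z-c}$ contributes a residue $-1/\log q$ times the value of the rest of the expression at $z=c$.

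Next I would assemble the two contributions. For the $k=0$ term the binomial coefficient is $\binom{z-1}{0}=1$, so the residue is $\frac{q^{-z/2}+q^{z/2}}{(q^{-1}-q)^{-z}}$ times the residue of $S_{0}^{(a,1)}(z)$, all evaluated at $z=a-1$; for the $k=2$ term one picks up the extra factor $\binom{z+1}{2}=\tfrac12(z+1)z$ evaluated at $z=a-1$, namely $\tfrac12 a(a-1)$, together with the residue of $S_{2}^{(a,1)}(z)$. I would also need to include the $\Gamma(z)$ prefactor from the statement: since $\Gamma(z)$ is regular and nonzero at $z=a-1>0$ (here $a\pm 1>0$ forces $a>1$), the condition $\res_{z=n-2}\Gamma(z)\zeta^{(a,1)}(z)=0$ is equivalent to $\res_{z=a-1}\zeta^{(a,1)}(z)=0$, so the $\Gamma$ factor plays no role beyond this remark. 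Adding the two residues and clearing the common nonzero factors (powers of $q^{(a-1)/2}$, of $(q^{-1}-q)$, and $-1/\log q$), I expect the vanishing condition to reduce to a single algebraic equation in $q$ and $a$, which after simplification should factor so that one factor is manifestly nonzero for $0<q<1$ and the other vanishes exactly when $a=2$. This matches the heat-kernel computation $a_2(C)=0$ done in the previous subsection, which is the sanity check for the algebra.

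The main obstacle I anticipate is purely computational bookkeeping: correctly extracting the residues of $S_{0}^{(a,1)}$ and $S_{2}^{(a,1)}$ from the rational expression
\[
S_{k}^{(a,1)}(z)=\frac{q^{k+z/2}\left(1-q^{2k+z}\right)}{\left(1-q^{z-(a+1)+k}\right)\left(1-q^{z-(a-1)+k}\right)\left(1-q^{a+1+k}\right)\left(1-q^{a-1+k}\right)},
\]
keeping track of which factor produces the pole in each case (the second factor for $k=0$, the first factor for $k=2$) and evaluating the surviving three factors, the numerator, the binomial coefficient, and the overall prefactor at $z=a-1$ without sign errors. In particular the numerator $1-q^{2k+z}$ at $z=a-1$ becomes $1-q^{a+1}$ when $k=0$ and $1-q^{a+3}$ when $k=2$, and one should watch that the factor $1-q^{a-1+k}$ in the denominator does not itself vanish — it does not, since $a-1>0$. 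Once the two residue contributions are written over a common denominator, the resulting numerator is a Laurent polynomial in $q^{a}$ and $q$ whose factorization yielding the root $a=2$ is the only nontrivial step; I would verify it directly, and cross-check numerically against $a_2(C)=0$ in the limit $q\to1$.
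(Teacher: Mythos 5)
Your proposal follows essentially the same route as the paper's proof: isolate the two singular terms $k=0$ and $k=2$ of the series at $z=a-1$, extract the simple-pole residues with the binomial factor $\tfrac12 a(a-1)$ from the $k=2$ term, and observe that the combined residue vanishes iff $\tfrac12 a(a-1)=1$, with $a=-1$ excluded by $a\pm1>0$. The plan is correct; the only remaining work is the final algebraic simplification, which the paper carries out and which indeed collapses to the factor $\bigl(-1+\tfrac12 a(a-1)\bigr)$ you anticipate.
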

\begin{proof}
From the proof of Proposition \ref{prop:spec-dim} we have that
\[
\zeta^{(a,1)}(z)=\frac{q^{-z/2}+q^{z/2}}{(q^{-1}-q)^{-z}}\sum_{k=0}^{\infty}\left(\begin{array}{c}
z+k-1\\
k
\end{array}\right)S_{k}^{(a,1)}(z)\ ,
\]
where $S_{k}^{(a,1)}(z)$ is given by 
\[
S_{k}^{(a,1)}(z)=\frac{q^{k+z/2}\left(1-q^{2k+z}\right)}{\left(1-q^{z-(a+1)+k}\right)\left(1-q^{z-(a-1)+k}\right)\left(1-q^{a+1+k}\right)\left(1-q^{a-1+k}\right)}\ .
\]
Since the spectral dimension is given by $n=a+1$, we should take
the residue at $z=n-2=a-1$. Notice that for this value of $z$ the
zeta function $\zeta^{(a,1)}(z)$ has two poles, coming respectively
from the terms $k=0$ and $k=2$. Omitting a common prefactor, these
are given by
\[
\begin{split} & \quad\frac{q^{z/2}\left(1-q^{z}\right)}{\left(1-q^{z-a-1}\right)\left(1-q^{z-a+1}\right)\left(1-q^{a+1}\right)\left(1-q^{a-1}\right)}\\
 & +\frac{1}{2}z(z+1)\frac{q^{2+z/2}\left(1-q^{4+z}\right)}{\left(1-q^{z-a+1}\right)\left(1-q^{z-a+3}\right)\left(1-q^{a+3}\right)\left(1-q^{a+1}\right)}\ .
\end{split}
\]

Taking now the limit $z\to a-1$ in the regular terms we get
\[
\begin{split} & \quad\frac{q^{(a-1)/2}\left(1-q^{a-1}\right)}{\left(1-q^{-2}\right)\left(1-q^{z-a+1}\right)\left(1-q^{a+1}\right)\left(1-q^{a-1}\right)}\\
 & +\frac{1}{2}a(a-1)\frac{q^{2+(a-1)/2}\left(1-q^{a+3}\right)}{\left(1-q^{z-a+1}\right)\left(1-q^{2}\right)\left(1-q^{a+3}\right)\left(1-q^{a+1}\right)}\ .
\end{split}
\]
This expression can be rearranged as
\[
\begin{split} & \quad\frac{q^{(a-1)/2}}{1-q^{a+1}}\left(\frac{1}{1-q^{-2}}+\frac{1}{2}a(a-1)\frac{q^{2}}{1-q^{2}}\right)\frac{1}{1-q^{z-a+1}}\\
 & =\frac{q^{(a-1)/2}}{1-q^{a+1}}\frac{q^{2}}{1-q^{2}}\left(-1+\frac{1}{2}a(a-1)\right)\frac{1}{1-q^{z-a+1}}\ .
\end{split}
\]
The residue of this term is then non-zero unless we have $\frac{1}{2}a(a-1)=1$,
whose solutions are $a=-1$ and $a=2$. But we known from Proposition
\ref{prop:spec-dim} that we have to impose the conditions $a\pm1>0$
for the spectral dimension to exists. This excludes the case $a=-1$.\end{proof}

{\footnotesize
\emph{Acknowledgements}. I wish to thank Jens Kaad for helpful comments on this work. I also want to thank Francesca Arici for comments on the first version of this paper.
}


\begin{thebibliography}{50}
\bibitem[CGRS12]{int-loc}
A.L. Carey, V. Gayral, A. Rennie, F.A. Sukochev,
\textit{Integration on locally compact noncommutative spaces}, Journal
of Functional Analysis 263, no. 2 (2012): 383-414.

\bibitem[CNNR11]{modular3}
A.L. Carey, S. Neshveyev, R. Nest, A. Rennie,
\textit{Twisted cyclic theory, equivariant KK-theory and KMS states},
Journal für die reine und angewandte Mathematik (Crelles Journal)
2011, no. 650 (2011): 161-191.

\bibitem[CPR10]{modular1}
A.L. Carey, J. Phillips, A. Rennie,
\textit{Twisted cyclic theory and an index theory for the gauge invariant KMS state on the Cuntz algebra $O_{n}$},
Journal of K-theory: K-theory and its Applications to Algebra, Geometry, and Topology 6, no. 02 (2010): 339-380.

\bibitem[CPRS06]{spec-dim}
A.L. Carey, J. Phillips, A. Rennie, F.A. Sukochev,
\textit{The local index formula in semifinite von Neumann algebras I: Spectral flow},
Advances in Mathematics 202, no. 2 (2006): 451-516.

\bibitem[CRSS07]{asym-zeta}
A.L. Carey, A. Rennie, A. Sedaev, F.A. Sukochev,
\textit{The Dixmier trace and asymptotics of zeta functions},
Journal of Functional Analysis 249, no. 2 (2007): 253-283.

\bibitem[CRT09]{modular2}
A.L. Carey, A. Rennie, K. Tong, \textit{Spectral
flow invariants and twisted cyclic theory for the Haar state on $SU_{q}(2)$},
Journal of Geometry and Physics 59, no. 10 (2009): 1431-1452.

\bibitem[Con]{con-book}A. Connes, \textit{Noncommutative Geometry}, Academic
Press, 1994.

\bibitem[CoLa01]{connes-landi}
A. Connes, G. Landi,
\textit{Noncommutative Manifolds, the Instanton Algebra and Isospectral Deformations},
Communications in mathematical physics 221, no. 1 (2001): 141-159.

\bibitem[CoMo95]{local-index}
A. Connes, H. Moscovici,
\textit{The local index formula in noncommutative geometry},
Geometric \& Functional Analysis GAFA 5, no. 2 (1995): 174-243.

\bibitem[CoMo08]{type III}A. Connes, H. Moscovici, \textit{Type III and spectral
triples}, Traces in Number Theory, Geometry and Quantum Fields 38 (2008), 57.

\bibitem[D\k{a}Si03]{DS-pod}
L. D\k{a}browski, A. Sitarz,
\textit{Dirac operator on the standard Podle\'{s} quantum sphere},
In: Noncommutative Geometry and Quantum Groups, Banach Centre Publications 61, P.M. Hajac and W. Pusz (eds.), Warszawa IMPAN (2003): 49-58.

\bibitem[HaKr05]{twist-sl2}
T. Hadfield, U. Krähmer,
\textit{Twisted homology of quantum $SL(2)$},
K-theory 34, no. 4 (2005): 327-360.

\bibitem[Hec01]{hecken}
I. Heckenberger,
\textit{Classification of Left-Covariant Differential Calculi on the Quantum Group $SL_{q}(2)$},
Journal of Algebra 237, no. 1 (2001): 203-237.

\bibitem[Kaa11]{kaad}
J. Kaad, \textit{On modular semifinite index theory},
arXiv preprint arXiv:1111.6546 (2011).

\bibitem[KaSe12]{suq2-kaad}
J. Kaad, R. Senior,
\textit{A twisted spectral triple for quantum $SU(2)$},
Journal of Geometry and Physics 62, no. 4 (2012): 731-739.

\bibitem[Ks]{KS}
A.U. Klimyk, K. Schmüdgen,
\textit{Quantum groups and their representations},
Vol. 552. Berlin: Springer, 1997.

\bibitem[KrWa11]{twisted-calculi}
U. Krähmer, E. Wagner,
\textit{Twisted spectral triples and covariant differential calculi},
Banach Center Publ 93 (2011): 177-188.

\bibitem[KrWa13]{res-form}
U. Krähmer, E. Wagner,
\textit{A residue formula for the fundamental Hochschild class of the Podle\'{s} sphere},
Journal of K-theory: K-theory and its Applications to Algebra, Geometry, and Topology 12, no. 02 (2013): 257-271.

\bibitem[Mat13]{kmink2}
M. Matassa, \textit{On the spectral and homological dimension of $\kappa$-Minkowski space}, arXiv preprint arXiv:1309.1054
(2013).

\bibitem[Mat14]{kmink1}
M. Matassa,
\textit{A modular spectral triple for $\kappa$-Minkowski space},
Journal of Geometry and Physics 76 (2014): 136-157.

\bibitem[NeTu05]{local-sphere}
S. Neshveyev, L. Tuset,
\textit{A local index formula for the quantum sphere},
Communications in mathematical physics 254, no. 2 (2005): 323-341.

\bibitem[NeTu10]{netu10}
S. Neshveyev, L. Tuset,
\textit{The Dirac operator on compact quantum groups},
Journal für die reine und angewandte Mathematik (Crelles Journal) 2010, no. 641 (2010): 1-20.

\bibitem[Tak]{tak}M. Takesaki, \textit{Theory of operator algebras II},
Vol. 2. Springer, 2003.
\end{thebibliography}
\end{document}